\documentclass[a4paper,twocolumn,11pt,accepted=2017-05-09]{quantumarticle}
\pdfoutput=1
\usepackage[utf8]{inputenc}
\usepackage[english]{babel}
\usepackage[T1]{fontenc}
\usepackage{amsmath}
\usepackage{hyperref}
\usepackage{tikz}
\usepackage{lipsum}
\usepackage{amsmath}  % 用于数学公式
\usepackage{amssymb}  % 用于数学符号，包括 \mathbb
\usepackage{amsthm}   % 用于定理、证明等环境
\usepackage{ulem}   %\uline{}
\newtheorem{corollary}{Corollary}
\newtheorem{remark}{Remark}
\newtheorem{theorem}{Theorem}
\newtheorem{proposition}{Proposition}
\usepackage{fancyhdr}  % 加载 fancyhdr 包
\pagestyle{fancy}      % 使用 fancyhdr 风格
\fancyhf{}             % 清除所有页眉和页脚的内容
% 设置右上角为页码
\fancyfoot[R]{\thepage}  
% 设置右下角为页码 \fancyhead[R]{\thepage}  % 页码显示在右上角
\usepackage[utf8]{inputenc}
\usepackage{authblk} % 使用authblk宏包来处理作者和机构

\begin{document}
	
	\title{A new fidelity of quantum channel evolution and its geometric interpretation}
	
	\author[1]{Xiaojing Yan} 
	\author[1]{Xiao Sun}   
	\author[2]{Mingming Du} 	
	\author[1]{Jiashan Tang} 
	\email{tangjs@njupt.edu.cn}

	\affil[1]{College of Science, Nanjing University of Posts and Telecommunications, 210003 Nanjing, P R China}
	\affil[2]{College of Electronic and Optical Engineering, Nanjing University of Posts and Telecommunications, 210003 Nanjing, P R China}

	%\homepage{http://quantum-journal.org}\orcid{0000-0003-0290-4698}\thanks{You can use the \texttt{\textbackslash{}email}, \texttt{\textbackslash{}homepage}, and \texttt{\textbackslash{}thanks} commands to add additional information for the preceding \texttt{\textbackslash{}author}. If applicable, this can also be used to indicate that a work has previously been published in conference proceedings.}
	
	\maketitle
	
\begin{abstract}
		Fidelity is crucial for characterizing transformations of quantum states under various quantum channels, which can be  served as a fundamental tool in resource theories. Firstly, we define an \(\alpha\)-\(z\)-fidelity as a significant quantity in quantum information theory and give the properties of the fidelity with orders \(\alpha\) and \(z\).
		% , such as the parameter range of the data processing inequality and
		%  invariance under tensoring with another system, and its role in the manipulation of quantum resources
		Secondly, by analyzing the $\alpha$-$z$-fidelity under the evolution of different types of quantum channels (single orbit, all quantum channels, unitary quantum channels, and mixed unitary quantum channels), we propose a limit formula for the maximum and the minimum of the $\alpha$-$z$-fidelity. In addition, we have extended the $\alpha$-$z$-Rényi relative entropy, providing new insights into its relevance for resource quantification. Finally, we offer a geometric interpretation for measuring the distance between quantum states, contributing to the broader understanding of the operational and transformative power of dynamical quantum resources across various physical settings.
	\end{abstract}

%\begin{keywords}
%Fidelity, $\alpha$-fidelity, $\alpha-z$-fidelity, quantum channel, geometric interpretation
%\end{keywords}

	\section{Introduction}
	Fidelity was firstly proposed by Uhlmann using the concept of ``transition probability'' (\cite{AU}) which plays an important role in quantum information. Quantum state not only is carriers of information, but also enables the completion of complex computational processes through controlling within a micro-environment(\cite{BS,TKL}). Therefore, the fidelity between two quantum states $\rho$ and $\sigma$ can be served as a ``distance'' measure to elucidate two important questions in quantum information science(\cite{MANI}), namely, (i) the meaning of similarity between two pieces of information, and (ii) the meaning of an operation maintaining information. 
	\par 
	In quantum information science, fidelity has been extended to $\alpha$-fidelity
	by Dominique Spehner(\cite{SD,YXZ}), which generalizes the traditional quantum fidelity as a distance measure for quantum state similarity. When $\alpha$ = 1/2, $\alpha$-fidelity coincides with the standard quantum fidelity. This generalization, along with other measures like sub-fidelity and superfidelity, provides a multi-dimensional framework for comparing quantum states. Researchers aim to integrate these measures into a unified mathematical framework that satisfies the data processing inequality, enhancing quantum information processing efficiency. In this note, 
	motivated by two-parameter functions(\cite{GABA, KMRA}), we propose  a new  two-parameter quantum fidelity, $\alpha$-$z$-fidelity, which is denoted by $F_{\alpha, z}(\rho, \sigma)$. The new conception not only generalized the above mentioned $\alpha$-fidelity, but also provides a new framework for the research of fidelity.
	
	\par
	With the development of quantum computing and communication technologies, the study of dynamic quantum resources has become a hot topic, focusing on how to leverage quantum features such as quantum states, entanglement, and interference to enhance information processing capabilities or optimize tasks(\cite{SP}). During the evolution of a quantum channel, the dynamic changes in quantum resources have  impact directly on the measurement of fidelity, especially on the decoherence or disentanglement effects(\cite{BCH}). The $\alpha$-$z$-fidelity, as a measure of quantum state similarity, not only helps to analyze quantum state changes but also provides theoretical support for optimizing quantum resources. This paper introduces the $\alpha$-$z$-fidelity and proposes a new method to study quantum state evolution.
	
	\par 
	Currently, researchers are striving to construct a mathematical framework that integrates these measures and optimizes their properties, especially in the aspect of satisfying the data processing inequality with the hope of expanding the applicability of fidelity parameters, and thus achieving higher efficiency and accuracy in quantum information processing. Fortunately, such kind of mathematical framework does exist. For the sake, we introduce a family of two-parameter quantum fidelity, $F_{\alpha,z}(\rho,\sigma)$ for $\alpha \text{, } z \in (0, \infty)$. 
	The $\alpha$-$z$-fidelity will reduce to the  $\alpha$-fidelity if $\rho$ and $\sigma$ are 
	commuting operators, moreover, it will reduce to the classical  fidelity if $\alpha$ = $z $ = 1/2.
	
	\par It has been determined,  in the literatures  \cite{YXZ, ZLFS}, that the maximum and minimum values of the quantum $\alpha$-fidelity and Rényi relative entropy between two quantum states have been successfully applied to the field of quantum mechanics. Inspired by these works, we now turn our attention to the $\alpha$-$z$-fidelity in the presence of a unitary dynamics, as well as the distance between two quantum states in a general unitary evolution: namely, the maximum and minimum values of $\alpha$-$z$-fidelity, $F_{\alpha,z}(V\rho V^*, W\sigma W^*)$, where \(V\)and \(W\) are unitary operations. This research will further reveal the changes in quantum state accuracy in the process of evolution, presenting  new insights for quantum information science.
	\par 
	The paper is organized as follows: Sec. II defines this new family of fidelity and provides some properties of the $\alpha$-$z$-fidelity. Sec. III  presents the main results for the quantum \(\alpha\)-\(z\)-fidelity of unitary orbits: maximizing and minimizing \( F_{\alpha,z}(\rho, U\sigma U^*) \). Moreover, it is proved that, under certain conditions, the values of the quantum \(\alpha\)-\(z\)-fidelity may traverse some special interval. Sec. IV studies the minimum and maximum values of the \(\alpha\)-$z$ fidelity of the quantum state \(\rho\) under the evolution of quantum channels. It primarily considers three types of quantum channels: all quantum channels, unitary quantum channels, and mixed unitary channels. By introducing the concepts of closed density operators and the convex set of quantum channels, Sec V provides a geometric interpretation of the $\alpha$-$z$-fidelity. From a geometric perspective, the $\alpha$-$z$-fidelity not only helps to analyze the robustness of quantum information against different types of noise or disturbances, but also plays an important role in understanding the transformation of quantum resources under various quantum operations.
	
	\section{Preliminaries}
	Let \( \mathcal{H}_d \) be a finite-dimensional Hilbert space of dimension \( d \) (where \( d < \infty \)). The set of linear operators acting on \( \mathcal {H}_d \) is denoted by \( \mathcal {L(H}_d) \). Define \(\mathcal{P}(\mathcal{H}_d)\) as the set of all positive semi-definite operators acting on \(\mathcal{H}_d\), i.e., \(
	\mathcal{P}(\mathcal{H}_d) := \{X \in \mathcal{L}(\mathcal{H}_d) : X \geq 0\}.
	\)
	Similarly, define \(\mathcal{D}(\mathcal{H}_d)\) as the set of density operators on \(\mathcal{H}_d\), i.e., \(
	\mathcal{D}(\mathcal{H}_d) := \{\rho \in \mathcal{P}(\mathcal{H}_d) : \mathrm{Tr}(\rho) = 1\}.\)
	
	\par For quantum states $\rho, \sigma \in \mathcal{D}(\mathcal{H}_d)$ i.e., $\alpha, z \in (0, \infty)$, the quantum $\alpha$-$z$-fidelity between $\rho$ and $\sigma$ is defined as follows:
	\begin{equation} \label{2-1}
		\begin{aligned}
			F_{\alpha, z}(\rho, \sigma) & := \left( \text{Tr} \left[ \left( \sigma^{\frac{1-\alpha}{2z}} \rho^{\frac{\alpha}{z}} \sigma^{\frac{1-\alpha}{2z}} \right)^{z} \right] \right)^\frac{1}{\alpha} \\
			& = \left( \text{Tr} \left[ \left( \rho^{\frac{\alpha}{2z}} \sigma^{\frac{1-\alpha}{z}} \rho^{\frac{\alpha}{2z}} \right)^{z} \right] \right)^{\frac{1}{\alpha}}.
		\end{aligned}
	\end{equation}
	When \( \rho \) is non-degenerate, \( F_{\alpha, z}(\rho, \sigma) \) is continuous for \( \alpha, z \in (0, \infty) \). It is clear to see that
	\[
	F(\rho,\sigma)=F_{(\frac{1}{2},\frac{1}{2})}(\rho,\sigma), \:  F_{\alpha}(\rho,\sigma)= F_ {(\alpha,\alpha)}(\rho ,\sigma)\text{,}
	\]
	where \( F(\rho, \sigma) \) and \( F_{\alpha}(\rho, \sigma) \) are the quantum fidelity and \(\alpha\)-fidelity, respectively.\par When $\rho$ and $\sigma$ can commute, i.e. $\rho \sum_{i=1}^{d} \tilde{p}_i |i\rangle\langle i|$,  $\sigma = \sum_{i=1}^{d} \tilde{q}_i |i\rangle\langle i|$, then the quantum $\alpha$-z-fidelity simplifies to the following $\alpha$-fidelity:
	\begin{equation}\label{2-2}
		\begin{aligned}
			F_{\alpha,z}^{C}(\tilde{p},\tilde{q}) = \left(\sum_{i=1}^{d}\tilde{p}_{i}^\alpha \tilde{q}_{i}^{1-\alpha}\right)^{\frac{1}{\alpha}} = F_{\alpha}^{C}(\tilde{p},\tilde{q})\text{,}		
		\end{aligned}
	\end{equation}
	\noindent where $\tilde{p} = \{\tilde{p}_1, \ldots, \tilde{p}_d\}$ and $\tilde{q} = \{\tilde{q}_1, \ldots, \tilde{q}_d\}$ are two probability distributions.\par
	
	If $\operatorname{supp}(\rho) = \operatorname{supp}(\sigma)$, then $F_{\alpha, -z} = F_{\alpha, z}$. In other words, $F_{\alpha, z}$ is an even function with respect to   $z$(\cite{KMRA}). In this paper, to simplify the problem, we will only consider the region that \( z > 0 \) .
	
	To be a useful distance between two quantum states, one may expect that the $\alpha$-$z$-fidelity satisfies some properties, including the data processing inequality in the quantum information theoretic framework. Now, we will list some properties as follows.
	\begin{proposition}\label{pp2.1}
		Let $\rho, \sigma \in \mathcal{D(H}_d)$,
		\par 
		1. Unitary invariance(\cite{GS}).  For any unitary \( U \in \mathcal{L(H}_d) \), \( F_{\alpha,z}(U \rho U^*, U \sigma U^*) = F_{\alpha,z}(\rho, \sigma) \)
		% 2. For \(\rho_1, \sigma_1 \in D(H_1)\) and \(\rho_2, \sigma_2 \in D(H_2)\), and \(\alpha, z \in \mathbb{R}^+\),
		%\[\begin{aligned}
			%	F_{\alpha,z}(\rho_1\otimes\rho_2,\sigma_1\otimes\sigma_2) = F_{\alpha,z}(\rho_1,\sigma_1) \cdot F_{\alpha,z}(\rho_2,\sigma_2).
			%	\\ %F_{\alpha,z}(\rho_1\oplus\rho_2,\sigma_1\oplus\sigma_2) = F_{\alpha,z}(\rho_1,\sigma_1)+F_{\alpha,z}(\rho_2,\sigma_2).
			%\end{aligned}
			%	\]
			%3. If \(\tau \in D(K)\), then \(F_{\alpha,z}(\rho \otimes \tau, \sigma \otimes \tau) = F_{\alpha,z}(\rho, \sigma)\).
			
			2. Concavity and Convexity(\cite{ZH}). \( F_{\alpha, z}\) is concave, for \(\alpha, z \in \{(\alpha, z) \mid 0 < \alpha < 1, z \geq \max\{\alpha,1-\alpha\}\}\). \( F_{\alpha, z} \) is convex, for \( \{(\alpha, z) \mid 1 < \alpha \leq 2, \frac{\alpha}{2}\leq z \leq \alpha \}\bigcup \{(\alpha, z) \mid 2 \leq \alpha < \infty, \alpha-1 \leq z \leq \alpha\}\).
			
			3.Data processing inequality(\cite{GS,HZ})  Let \(\Phi\) be a completely positive and trace preserving map (or quantum channel) on \(\mathcal{D(H}_d)\). For \(\alpha, z \in  \{(\alpha, z) \mid 1 < \alpha \leq 2, \frac{\alpha}{2}\leq z \leq \alpha \}\bigcup \{(\alpha, z) \mid 2 \leq \alpha < \infty, \alpha-1 \leq z \leq \alpha\}\),
			\begin{equation}\label{2-3}
				F_{\alpha,z}(N(\rho),N(\sigma) \leq F_{\alpha,z}(\rho,\sigma).
			\end{equation}	
		\end{proposition}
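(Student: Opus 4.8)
The plan is to deduce the data processing inequality \eqref{2-3} for $F_{\alpha,z}$ from the corresponding, and by now standard, monotonicity of the $\alpha$-$z$-Rényi relative entropy, exploiting the fact that both quantities are built from a single trace functional. The first step is to isolate that functional. Writing
\[
Q_{\alpha,z}(\rho\|\sigma) := \text{Tr}\left[\left(\sigma^{\frac{1-\alpha}{2z}} \rho^{\frac{\alpha}{z}} \sigma^{\frac{1-\alpha}{2z}}\right)^{z}\right],
\]
the definition \eqref{2-1} reads $F_{\alpha,z}(\rho,\sigma) = Q_{\alpha,z}(\rho\|\sigma)^{1/\alpha}$, while the $\alpha$-$z$-Rényi relative entropy is $D_{\alpha,z}(\rho\|\sigma) = \frac{1}{\alpha-1}\log Q_{\alpha,z}(\rho\|\sigma)$. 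Thus all three objects are monotone reparametrizations of one another, and proving \eqref{2-3} reduces to showing that $Q_{\alpha,z}$ does not increase under the channel $N$ on the stated region.

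Next I would record that the region in part 3, namely $\{1<\alpha\le 2,\ \tfrac{\alpha}{2}\le z\le \alpha\}\cup\{2\le\alpha<\infty,\ \alpha-1\le z\le\alpha\}$, is precisely $\{\alpha>1,\ \max\{\tfrac{\alpha}{2},\alpha-1\}\le z\le\alpha\}$, since $\tfrac{\alpha}{2}\ge\alpha-1$ exactly when $\alpha\le 2$. This is exactly the range on which the data processing inequality for $D_{\alpha,z}$ is known to hold (\cite{GS,HZ}): on this range one has $D_{\alpha,z}(N(\rho)\|N(\sigma))\le D_{\alpha,z}(\rho\|\sigma)$ for every quantum channel $N$.

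The final step is the translation, which is purely a matter of the signs of the reparametrizing functions. Since $\alpha>1$ throughout the region, $\frac{1}{\alpha-1}>0$, so the entropy inequality gives
\[
\log Q_{\alpha,z}(N(\rho)\|N(\sigma)) \le \log Q_{\alpha,z}(\rho\|\sigma),
\]
hence $Q_{\alpha,z}(N(\rho)\|N(\sigma))\le Q_{\alpha,z}(\rho\|\sigma)$; applying the increasing map $t\mapsto t^{1/\alpha}$ on $[0,\infty)$ (valid as $\alpha>0$) yields $F_{\alpha,z}(N(\rho),N(\sigma))\le F_{\alpha,z}(\rho,\sigma)$, which is \eqref{2-3}.

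The genuine difficulty is hidden in the invoked entropy inequality, not in the reparametrization. Establishing the monotonicity of $D_{\alpha,z}$ from scratch requires the joint convexity (for $\alpha>1$) of $(\rho,\sigma)\mapsto Q_{\alpha,z}(\rho\|\sigma)$ on exactly this region, an operator-inequality statement resting on Lieb-type concavity/convexity theorems, whose sharp parameter range is itself a substantial result. Combined with a Stinespring representation of $N$ as an isometry followed by a partial trace — isometries leave $Q_{\alpha,z}$ invariant (this is the isometric extension of the unitary invariance in part 1), and the partial-trace case is handled by the joint convexity — this delivers the channel monotonicity. I would therefore treat the convexity/entropy result as the cited input and present the fidelity inequality as its corollary, so that the main obstacle, the operator convexity on the boundary of the admissible region, is precisely the part borrowed from the literature.
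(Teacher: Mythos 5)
The paper never actually proves Proposition~\ref{pp2.1}: it is a preliminaries item whose three parts are imported wholesale from the cited literature (\cite{GS}, \cite{ZH}, \cite{HZ}), so there is no in-paper argument to compare yours against, only the citations it points to. Judged on its own terms, your treatment of part~3 is correct and is the standard route those citations intend: $F_{\alpha,z}=Q_{\alpha,z}^{1/\alpha}$ and the R\'enyi relative entropy $\frac{1}{\alpha-1}\log Q_{\alpha,z}$ (the paper's $S_{\alpha,z}$, up to the factor $\alpha$) are both increasing functions of the same trace functional once $\alpha>1$; your identification of the stated region with $\{\alpha>1,\ \max\{\alpha/2,\alpha-1\}\le z\le\alpha\}$ is exactly the parameter range of Zhang's monotonicity theorem; and the sign bookkeeping ($\frac{1}{\alpha-1}>0$, $t\mapsto t^{1/\alpha}$ increasing) that converts the entropy DPI into inequality~\eqref{2-3} is handled correctly. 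Two caveats. First, you prove only part~3: part~1 deserves its one-line argument ($(UXU^*)^s=UX^sU^*$ together with cyclicity of the trace) rather than a citation, and part~2 you explicitly borrow --- defensible, since it is precisely the Carlen--Frank--Lieb-type convexity theorem, but it means your proposal is a reduction of part~3 to the literature, not a self-contained proof of the proposition. Second, you are right, and it is worth stating as the main structural point, that the ``known'' DPI for the $\alpha$-$z$ entropy is not independent of part~2: in \cite{ZH,HZ} the joint convexity/concavity of $Q_{\alpha,z}$ is the theorem and the DPI is its corollary via Stinespring dilation, isometric invariance, and the unitary-average representation of the partial trace, so part~3 is in effect a corollary of part~2; your closing paragraph gets this dependency in the correct order, which is the honest way to present what the proposition actually rests on.
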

		\begin{proposition}\label{pp2.2} Some math properties, 
			\par 
			1. Golden-Thompson inequality(\cite{DSMB}).  Let \( A, B \in \mathcal{L(H}_d) \) be two Hermitian operators on the Hilbert space \( \mathcal{H}_d \). Then,
			\begin{equation}\label{2-4}
				\begin{aligned}
					\operatorname{Tr}[\exp(A + B)] \leq \operatorname{Tr}[\exp(A) \exp(B)]\text{,}
				\end{aligned}
			\end{equation}
			holds, with equality if and only if \( A \) and \( B \) commute.
			\par 
			2. Araki-Lieb-Thirring Inequality(\cite{EC}).  Let \(A,B\in \mathcal{P(H}_d)\) and \(q > 0\),then
			\begin{equation}\label{2-5}
				\begin{aligned}
					&\operatorname{Tr}[(B^\frac{r}{2}A^rB^\frac{r}{2})^\frac{q}{r}] \leq  \operatorname{Tr}[(B^\frac{1}{2}AB^\frac{1}{2})^q], \quad 0 < r \leq 1\text{,}\\
					&\operatorname{Tr}[(B^\frac{r}{2}A^rB^\frac{r}{2})^\frac{q}{r}] \geq  \operatorname{Tr}[(B^\frac{1}{2}AB^\frac{1}{2})^q], \quad r \geq 1.
				\end{aligned}
			\end{equation}
			\par
			%5. \textbf{Norm Multiplication.}\cite{JW}  For any linear operators \( A \in L(H^3, H^4) \), \( B \in L(H^2, H^3) \), and \( C \in L(H^1, H^2) \), it holds that	\begin{equation}\label{2-6}
				%		\begin{aligned}
					%			\|ABC\|_p \leq \|A\|_\infty \|B\|_p \|C\|_\infty, \quad p \in [1, \infty]
					%		\end{aligned}	\end{equation}
			%	When \( AB \) exists, there is
			%	\begin{equation}\label{2-7}
				%		\begin{aligned}
					%		\|AB\|_p \leq \|A\|_p \|B\|_p 
					%	\end{aligned}	\end{equation}where\( \|\cdot\|_p \) and \( \|\cdot\|_\infty \) denote the Schatten \( p \)-norm and the operator norm, respectively.
			
			%\end{proposition}
			\par
			3. Schur Convexity and Concavity(\cite{PS}).  Let 
			\(
			D = \{\tilde{p} = (\tilde{p}_1, \tilde{p}_2, \ldots, \tilde{p}_d) \in \mathbb{R}^d : \tilde{p}_1 \geq \tilde{p}_2 \geq \cdots \geq \tilde{p}_d \}
			\text{ be a subset of } \mathbb{R}^d, \text{ and } \phi \text{ satisfies } \phi : D \to \mathbb{R}. 
			\)
			Define
			\begin{equation*}
				\phi(\tilde{q}) = \sum_{i=1}^{d} \nu_i g(\tilde{q}_i)\text{,}
			\end{equation*}
			where \( \nu = (\nu_1, \ldots, \nu_d) \in D \) and \( \nu_i \geq 0 \text{ for } i = 1, \ldots, d. \) If \( g(\tilde{q}_i) \) is non-decreasing, then \( \phi \) is Schur-convex, If \( g(\tilde{q}_i) \) is non-increasing, then \( \phi \) is Schur-concave. 
		\end{proposition}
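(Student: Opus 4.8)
The plan is to read this as a statement about monotonicity with respect to the majorization order $\prec$ on the ordered set $D$: recall that $\phi$ being Schur-convex means $\tilde{q} \prec \tilde{q}'$ implies $\phi(\tilde{q}) \le \phi(\tilde{q}')$, with the reverse implication for Schur-concave. The essential difficulty to keep in mind from the outset is that $\phi(\tilde{q}) = \sum_i \nu_i g(\tilde{q}_i)$ is \emph{not} a symmetric function of its arguments, since the weights $\nu_i$ are tied to coordinate positions; consequently the classical Schur--Ostrowski criterion does not apply directly, and the proof must instead exploit the extra structure that $\nu \in D$ is itself decreasingly ordered.

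First I would reduce to elementary transfers. By the standard characterization of majorization on $D$, the relation $\tilde{q} \prec \tilde{q}'$ holds iff $\tilde{q}$ is obtained from $\tilde{q}'$ by a finite chain of two-point transfers, each moving mass $\epsilon > 0$ from a coordinate $j$ to a later coordinate $k > j$ while preserving the decreasing ordering. By telescoping along the chain it suffices to control the sign of the increment for one such transfer, i.e. to analyse
\begin{equation*}
\phi(\tilde{q}') - \phi(\tilde{q}) = \nu_j\bigl[g(a) - g(a-\epsilon)\bigr] - \nu_k\bigl[g(b+\epsilon) - g(b)\bigr],
\end{equation*}
where $a = \tilde{q}'_j \ge b = \tilde{q}'_k$ and $\nu_j \ge \nu_k \ge 0$. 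Applying the mean value theorem, $g(a)-g(a-\epsilon) = \epsilon\, g'(\xi_1)$ and $g(b+\epsilon)-g(b) = \epsilon\, g'(\xi_2)$ with $\xi_1 \ge \xi_2$, so the pairing of $\nu_j \ge \nu_k \ge 0$ with the behaviour of $g'$ at $\xi_1$ versus $\xi_2$ decides the sign; the Schur-concave case follows by reversing these inequalities.

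A cleaner route, which I would use to make the role of the ordered weights fully transparent, is Abel summation. Setting $W_m = \sum_{i=1}^{m}\bigl(g(\tilde{q}'_i) - g(\tilde{q}_i)\bigr)$ one obtains
\begin{equation*}
\phi(\tilde{q}') - \phi(\tilde{q}) = \sum_{m=1}^{d-1}(\nu_m - \nu_{m+1})\,W_m + \nu_d\,W_d,
\end{equation*}
and here every coefficient $\nu_m - \nu_{m+1}$ and $\nu_d$ is nonnegative precisely because $\nu \in D$ with $\nu_i \ge 0$. The whole statement then collapses to showing $W_m \ge 0$ for all $m$, i.e. that applying $g$ coordinatewise carries the majorization $\tilde{q} \prec \tilde{q}'$ into a weak majorization $\prec_w$ of the image vectors (with the reverse for the concave case).

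The hard part, and where I expect the genuine content to sit, is exactly this final step of securing $W_m \ge 0$ for \emph{every} $m$, including the endpoint $m = d$. The partial terms $W_m$ for $m < d$ are governed by the monotone ordering that pairs $g$ with the decreasing weights, but the total term $W_d = \sum_i g(\tilde{q}'_i) - \sum_i g(\tilde{q}_i)$ is controlled by the Schur behaviour of the \emph{symmetric} sum $\sum_i g(\cdot)$, which hinges on the second-order character of $g$ and not on monotonicity alone. I would therefore invoke the classical majorization theorem that a monotone convex $g$ sends $\prec$ to $\prec_w$ of the images (and a monotone concave $g$ yields the opposite), which delivers the sign of $W_m$ uniformly and closes the argument. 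In writing this up I would flag clearly that the monotonicity of $g$ is what interacts with the ordering of $\nu$ through the Abel weights, while the convexity/concavity of $g$ is the ingredient that pins down the decisive endpoint sum.
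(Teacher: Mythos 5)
Your proposal covers only the third item of this proposition. The Golden--Thompson inequality with its equality condition and the Araki--Lieb--Thirring inequality are not addressed at all, so as a proof of the stated proposition the attempt is incomplete on its face. For context, the paper itself proves none of the three items: they are quoted as known results with citations (\cite{DSMB} for Golden--Thompson, \cite{EC} for Araki--Lieb--Thirring, \cite{PS} for the majorization statement), so there is no internal proof to compare your argument against; the comparison below is therefore against the statement itself.

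On item 3, your machinery (reduction to ordered transfers, or Abel summation combined with the Hardy--Littlewood--P\'olya/Tomi\'c theorem) is the right machinery, but notice what it actually requires: to get $W_m \ge 0$ for every $m$ you must assume $g$ is non-decreasing \emph{and convex}, and to get $W_m \le 0$ you must assume $g$ is non-increasing \emph{and concave}; you concede this yourself when you write that the endpoint sum ``hinges on the second-order character of $g$ and not on monotonicity alone.'' The proposition, however, assumes only monotonicity, and under that hypothesis alone it is simply false, so no proof exists to be found. Take $d=2$ and $\nu=(1,1)\in D$: with the non-decreasing $g(t)=\sqrt{t}$ we have $(1/2,1/2)\prec(1,0)$ but $\phi(1/2,1/2)=\sqrt{2}>1=\phi(1,0)$, so $\phi$ is not Schur-convex; with the non-increasing $g(t)=1/t$ we have $(1/2,1/2)\prec(3/4,1/4)$ but $\phi(1/2,1/2)=4<16/3=\phi(3/4,1/4)$, so $\phi$ is not Schur-concave. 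Thus what you have written is a proof of a corrected, Marshall--Olkin-type statement with strengthened hypotheses, not of the statement given, and a faithful answer must flag that discrepancy explicitly rather than silently importing convexity. Two further cautions. First, your parenthetical ``a monotone concave $g$ yields the opposite'' is only correct for \emph{non-increasing} concave $g$: if $g$ is non-decreasing and concave, then already $W_1=g(\tilde{q}'_1)-g(\tilde{q}_1)\ge 0$, the wrong sign. Second, even the corrected statement does not cover the way the paper later invokes this proposition in the proof of Theorem~\ref{th3.1}, where $g(t)=t^{1-\alpha}$ with $\alpha>1$ is non-increasing but \emph{convex}; the second counterexample above is exactly that case with $\alpha=2$, so the missing hypothesis you identified is a genuine defect that propagates into the paper's main argument rather than a removable technicality.
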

		
		%Considering the above properties, let \(\phi\) be a continuous real-valued function defined on \(D = \{ \tilde{p} = (\tilde{p}_1, \tilde{p}_2, \ldots, \tilde{p}_d) \in \mathbb{R}^d : \tilde{p}_1 \geq \tilde{p}_2 \geq \cdots \geq \tilde{p}_d \}\)which is continuously differentiable in the interval \(D\). If \(\tilde{p} \prec \tilde{q}\) for \(\tilde{p}, \tilde{q} \in D\), then
		%\[
		%\phi(\tilde{p}) \leq \phi(\tilde{q})	\]
		%where \(	\phi^{(1)}(w) \geq \phi^{(2)}(w) \geq \cdots \geq \phi^{(n)}(w) \geq 0	\) \text{with} \( \phi^{(i)}(w) = \frac{\partial \phi(w)}{\partial w_i},  \forall w \in D,\).
		%Similarly, if \( \tilde{p} \prec \tilde{q}\)  for \(\tilde{p}, \tilde{q} \in D\), then\[	\phi(\tilde{p}) \leq \phi(\tilde{q}) 	\]
		%if and only if	\(0 \geq \phi^{(1)\T}(w) \geq \phi^{(2)}(w) \geq \cdots \geq \phi^{(n)}(w), \quad \forall w \in D 	\)
		\begin{proposition}[\cite{ZX}]\label{pp2.3}
			For two quantum states \(\rho, \sigma \in \mathcal{D(H}_d)\) with eigenvalues
			\(
			\lambda(\rho) = (\lambda_1, \ldots, \lambda_d )  \text{ and }  \lambda(\sigma) = (\mu_1, \ldots, \mu_d),
			\) we have
			\begin{equation}\label{2-6}
				\langle \lambda(\rho), \lambda(\sigma) \rangle := \sum_{i=1}^{d} \lambda_i \mu_i
			\end{equation}
			and \begin{equation}\label{2-8}
				\langle \lambda^{\downarrow}(\rho), \lambda^{\uparrow}(\sigma) \rangle\leq \operatorname{Tr}[\rho \sigma] \leq \langle \lambda^{\downarrow}(\rho), \lambda^{\downarrow}(\sigma) \rangle,
			\end{equation}
			where \(\lambda^{\downarrow}(\rho)\) and \(\lambda^{\uparrow}(\rho)\) are the eigenvalues of \(\lambda(\rho)\) rearranged in decreasing and increasing order, respectively.
		\end{proposition}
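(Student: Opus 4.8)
The plan is to recognize \eqref{2-8} as the Hermitian case of von Neumann's trace inequality and to reduce it to a statement about doubly stochastic matrices combined with the rearrangement inequality. The identity \eqref{2-6} is nothing more than the definition of the inner product $\langle\lambda(\rho),\lambda(\sigma)\rangle$ and needs no argument, so all the work goes into the two-sided bound. First I would fix spectral decompositions $\rho = \sum_{i=1}^d \lambda_i |e_i\rangle\langle e_i|$ and $\sigma = \sum_{j=1}^d \mu_j |f_j\rangle\langle f_j|$, where $\{|e_i\rangle\}$ and $\{|f_j\rangle\}$ are orthonormal eigenbases and the eigenvalues are listed in decreasing order, so that $\lambda_i=\lambda_i^{\downarrow}$ and $\mu_j=\mu_j^{\downarrow}$. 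Expanding the trace in these bases gives
\begin{equation*}
\operatorname{Tr}[\rho\sigma] = \sum_{i,j=1}^d \lambda_i \mu_j\, |\langle e_i | f_j\rangle|^2 = \sum_{i,j=1}^d D_{ij}\,\lambda_i \mu_j,
\end{equation*}
where I set $D_{ij} := |\langle e_i | f_j\rangle|^2$.

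The decisive step is to observe that $D=(D_{ij})$ is doubly stochastic: its entries are nonnegative, and since $\{|e_i\rangle\}$ and $\{|f_j\rangle\}$ are orthonormal bases, completeness gives $\sum_j |\langle e_i|f_j\rangle|^2 = \langle e_i|e_i\rangle = 1$ for every row and, symmetrically, $\sum_i |\langle e_i|f_j\rangle|^2 = 1$ for every column. Hence $\operatorname{Tr}[\rho\sigma]=\langle \lambda, D\mu\rangle$ is the value of a bilinear form evaluated at a point of the Birkhoff polytope. By Birkhoff's theorem I may write $D=\sum_k t_k P_{\pi_k}$ as a convex combination of permutation matrices $P_{\pi_k}$ with $t_k\ge 0$ and $\sum_k t_k =1$, so that
\begin{equation*}
\operatorname{Tr}[\rho\sigma] = \sum_k t_k \sum_{i=1}^d \lambda_i \mu_{\pi_k(i)},
\end{equation*}
a convex combination of the permutation sums $\sum_i \lambda_i \mu_{\pi(i)}$.

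Finally I would bound each permutation sum by the rearrangement inequality: over all permutations $\pi$, the sum $\sum_i \lambda_i \mu_{\pi(i)}$ is maximized when $\lambda$ and $\mu$ are equally ordered, yielding $\langle \lambda^{\downarrow},\mu^{\downarrow}\rangle$, and minimized when they are oppositely ordered, yielding $\langle \lambda^{\downarrow},\mu^{\uparrow}\rangle$. Since any convex combination of numbers lies between their smallest and largest values, the chain $\langle \lambda^{\downarrow},\mu^{\uparrow}\rangle \le \operatorname{Tr}[\rho\sigma] \le \langle \lambda^{\downarrow},\mu^{\downarrow}\rangle$ follows, which is exactly \eqref{2-8}. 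I expect the main subtlety to be the doubly stochastic reduction in the degenerate case, where the eigenbases are not unique: one must note that any admissible choice of eigenbasis produces the same trace and the same doubly stochastic overlap matrix, so neither the identity nor the bounds depend on that choice. The remaining ingredients, Birkhoff's theorem and the rearrangement inequality, are standard and carry no real obstruction.
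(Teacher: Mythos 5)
Your proof is correct, but note that the paper itself offers no proof of this proposition: it is stated as a result imported from the matrix-analysis literature (the citation \cite{ZX}), where it appears as the Hermitian case of von Neumann's trace inequality. Your argument --- expanding $\operatorname{Tr}[\rho\sigma]$ in the two eigenbases to exhibit it as $\langle \lambda, D\mu\rangle$ with $D_{ij}=|\langle e_i|f_j\rangle|^2$ doubly stochastic, invoking Birkhoff's theorem to write $D$ as a convex combination of permutation matrices, and finishing with the rearrangement inequality --- is precisely the standard textbook proof of that cited result, so you have in effect supplied the proof the paper omits rather than an alternative to one it gives. One small imprecision worth fixing: in the degenerate case, different admissible eigenbases do \emph{not} in general produce the same overlap matrix $D$; they produce possibly different matrices, each of which is doubly stochastic, and that is all the argument requires. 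The bounds are independent of the basis choice because every such $D$ lies in the Birkhoff polytope, not because $D$ is unique.
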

		
		\section{Quantum $\alpha$-z-fidelity between unitary orbits }
		Let \(\mathcal{D(H}_d)\) be the set of \( d \times d \) unitary matrices acting on a \( d \)-dimensional Hilbert space \( \mathcal{H}_d \). For the quantum state \( \rho \in \mathcal{D(H}_d)\), its unitary orbit is given by(\cite{ZLFS,SJJD})
		\begin{equation*}
			U_\rho = \{ U \rho U^\ast : U \in \mathcal{(H}_d) \}.
		\end{equation*}

		The goal is to study the extremal values of \( F_{\alpha,z} \) over these unitary orbits. Specifically, for any unitary operators \( V \) and \( W \), we have,
		\begin{equation*}\label{Equation 3-1}
			\begin{aligned}
				\min F_{\alpha,z}(V \rho V^\ast, W \sigma W^\ast) &= \min F_{\alpha,z}(\rho, U \sigma U^\ast)\text{,} \\
				\max F_{\alpha,z}(V \rho V^\ast, W \sigma W^\ast) &= \max F_{\alpha,z}(\rho, U \sigma U^\ast).
			\end{aligned}
		\end{equation*}
		where \(U = V^\ast W\). Therefore, finding the minimum or maximum of \( F_{\alpha,z} \) over all unitary orbits reduces to solving \(
		\min F_{\alpha,z}(\rho, U \sigma) \text{ and }
		\max F_{\alpha,z}(\rho, U \sigma).
		\)
		%\subsection{References and footnotes}
		\begin{theorem}\label{th3.1}
			Let \( \rho, \sigma \in \mathcal{D(H}_d) \). The quantum \(\alpha\)-z-fidelity between the unitary orbits \( U_\rho \) and \( U_\sigma \) satisfies the following relations, for \( z \in (0, \infty) \),
			\begin{equation}\label{3-9}
				\begin{aligned}
					&\max_{\substack{U \in U\mathcal{(H}_d)}}F_{\alpha,z}(\rho, U_{\sigma})\\&= \begin{cases}
						F_{\alpha}^{C}(\lambda^{\downarrow}(\rho),\lambda^{\downarrow}(\sigma)), & \alpha \in(0,1)\text{,}\\
						F_{\alpha}^{C}(\lambda^{\downarrow}(\rho),\lambda^{\uparrow}(\sigma)), & \alpha \in (1,\infty)\text{,}
					\end{cases}
				\end{aligned}
			\end{equation}
			for \(z\in (0,1) \),
			\begin{equation}\label{3-10}
				\begin{aligned}
					&\min_{\substack{U \in U\mathcal{(H}_d)}}F_{\alpha,z}(\rho, U_{\sigma})\\&= \begin{cases}
						F_{\alpha}^{C}(\lambda^{\downarrow}(\rho),\lambda^{\uparrow}(\sigma)), &  \alpha \in(0,1)\text{,} \\
						F_{\alpha}^{C}(\lambda^{\downarrow}(\rho),\lambda^{\downarrow}(\sigma)), &  \alpha \in (1,\infty)\text{,}
					\end{cases}
				\end{aligned}
			\end{equation}
			and for \(\alpha, z \in  \{(\alpha, z) \mid 1 < \alpha \leq 2, \frac{\alpha}{2}\leq z \leq \alpha \}\bigcup \{(\alpha, z) \mid 2 \leq \alpha < \infty, \alpha-1 \leq z \leq \alpha\}\),
			\begin{equation}\label{3-11}
				\begin{aligned}
					\min_{\substack{U \in U\mathcal{(H}_d)}}F_{\alpha,z}(\rho, U_{\sigma}) = F_{\alpha}^{C}(\lambda^{\downarrow}(\rho),\lambda^{\downarrow}(\sigma)).
				\end{aligned}
			\end{equation}
			where \(\lambda^{\downarrow}(\rho)\) and \(\lambda^{\downarrow}(\sigma)\)  (resp.  \(\lambda^{\uparrow}(\rho)\) and \(\lambda^{\uparrow}(\sigma)\)) are the eigenvalues of states $\rho$ and $\sigma$,  which are taken by rearranging the numbers \(\lambda(\rho)\) and \(\lambda(\sigma)\) in decreasing  order (resp. increasing order), respectively.
		\end{theorem}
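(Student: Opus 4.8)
The plan is to reduce the orbit optimization to a scalar eigenvalue problem. First I would use that $U$ is unitary, so $(U\sigma U^\ast)^s=U\sigma^s U^\ast$ for every real power $s$, together with cyclicity of the trace, to rewrite
\begin{equation*}
F_{\alpha,z}(\rho,U\sigma U^\ast)^{\alpha}=\operatorname{Tr}\!\left[\left(\sigma^{\frac{1-\alpha}{2z}}\,U^\ast\rho^{\frac{\alpha}{z}}U\,\sigma^{\frac{1-\alpha}{2z}}\right)^{z}\right]=:T(U),
\end{equation*}
so that (since $x\mapsto x^{1/\alpha}$ is increasing for $\alpha>0$) extremizing $F_{\alpha,z}$ over the orbit is equivalent to extremizing $T(U)$. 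Throughout, the decisive simplification is Eq.~\eqref{2-2}: whenever the optimal $U$ makes $\rho$ and $U\sigma U^\ast$ commute, $T(U)$ collapses to a classical sum $\sum_i\lambda_i^{\alpha}\mu_{\pi(i)}^{1-\alpha}$ that is independent of $z$, and the sign of $1-\alpha$ decides whether the decreasing rearrangement of $\lambda(\sigma)$ pairs with $\lambda^{\downarrow}(\rho)$ or against it.

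For the maximum \eqref{3-9}, valid for all $z>0$, I would note that the spectrum of $M:=\sigma^{\frac{1-\alpha}{2z}}U^\ast\rho^{\frac{\alpha}{z}}U\sigma^{\frac{1-\alpha}{2z}}$ coincides with that of the product $\sigma^{\frac{1-\alpha}{z}}\,(U^\ast\rho^{\frac{\alpha}{z}}U)$ of two positive operators. The Gelfand--Naimark/Horn log-majorization inequality then gives $\prod_{i\le k}\lambda^{\downarrow}_i(M)\le\prod_{i\le k}(\mu^{(1-\alpha)/z})^{\downarrow}_i(\lambda^{\alpha/z})^{\downarrow}_i$ for every $k$, with equality at $k=d$; equivalently $\lambda^{\downarrow}(M)$ is log-majorized by the decreasingly paired product of the two spectra. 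Because $t\mapsto e^{zt}$ is convex for every $z>0$, the power sum $\sum_i s_i^{z}$ is monotone under log-majorization, whence $T(U)\le\sum_i\big((\mu^{(1-\alpha)/z})^{\downarrow}_i(\lambda^{\alpha/z})^{\downarrow}_i\big)^{z}$ for all $U$. Equality is attained at the $U$ that simultaneously diagonalizes the two operators with decreasingly aligned eigenvalues, and evaluating the right-hand side (collapsing the $z$-powers and reading off the sign of $1-\alpha$) produces exactly \eqref{3-9}.

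For the minimum with $z\in(0,1)$ in \eqref{3-10} I would apply the Araki--Lieb--Thirring inequality (Proposition~\ref{pp2.2}, with $r=1/z\ge1$, $q=1$, $A=\rho^{\alpha}$, $B=\sigma^{1-\alpha}$) to obtain $T(U)\ge\operatorname{Tr}[\sigma^{1-\alpha}U^\ast\rho^{\alpha}U]$, with equality iff $\sigma$ and $U^\ast\rho U$ commute. Minimizing the right-hand side over the orbit is precisely Proposition~\ref{pp2.3} (von Neumann's trace inequality): the minimum pairs $\lambda^{\downarrow}(\sigma^{1-\alpha})$ with $\lambda^{\uparrow}(\rho^{\alpha})$ and is attained at a commuting configuration where the Araki--Lieb--Thirring bound is tight, so the bound is achieved; tracking the sign of $1-\alpha$ turns this into \eqref{3-10}. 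For the minimum on the convexity region \eqref{3-11}, where $z$ may exceed $1$ and this inequality points the wrong way, I would instead invoke the data processing inequality (Proposition~\ref{pp2.1}), which holds on exactly this set. Applying the pinching channel $\mathcal{P}$ in the eigenbasis of $\rho$ gives $\mathcal{P}(\rho)=\rho$ and a $\tau:=\mathcal{P}(U\sigma U^\ast)$ that commutes with $\rho$ with diagonal $\nu\prec\lambda(\sigma)$ (Schur--Horn), so $F_{\alpha,z}(\rho,U\sigma U^\ast)\ge F_{\alpha,z}(\rho,\tau)=F^{C}_{\alpha}(\lambda(\rho),\nu)$ by \eqref{2-2}. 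Since $\alpha>1$ here, Proposition~\ref{pp2.2} (Schur concavity, as $\nu\mapsto\nu^{1-\alpha}$ is decreasing) shows $\sum_i\lambda_i^{\alpha}\nu_i^{1-\alpha}$ is minimized at the most spread $\nu=\lambda(\sigma)$, and rearrangement selects the decreasing pairing; equality at the aligned $U$ then yields \eqref{3-11}.

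The main obstacle is the maximum for $z<1$: the Araki--Lieb--Thirring inequality there controls only the lower bound, so the uniform-in-$z$ statement \eqref{3-9} genuinely requires the log-majorization step together with the fact that power sums $\sum_i s_i^{z}$ remain log-majorization monotone for every $z>0$ (through convexity of $e^{zt}$, not of $t\mapsto t^{z}$). A secondary point demanding care is the bookkeeping: each case must correctly convert the aligned/anti-aligned pairing into $\lambda^{\downarrow}$ versus $\lambda^{\uparrow}$ according to the sign of $1-\alpha$, and one must verify in the proofs of \eqref{3-10} and \eqref{3-11} that the extremal commuting configuration is admissible, so that the Araki--Lieb--Thirring and data processing bounds are actually attained.
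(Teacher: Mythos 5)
Your proposal is correct, and its treatment of the two minimum statements follows essentially the paper's own route: for \eqref{3-10} the paper likewise applies Araki--Lieb--Thirring with $r=1/z\geq 1$, $q=1$ to drop down to $\operatorname{Tr}[\rho^{\alpha}U\sigma^{1-\alpha}U^{*}]$ and then invokes Proposition~\ref{pp2.3}, with attainment at the (anti-)aligned commuting configuration; for \eqref{3-11} the paper also pinches in the eigenbasis of $\rho$, applies the data processing inequality, writes the pinched diagonal as $D_{\bar{u}_0}\lambda^{\downarrow}(\sigma)$ with $D_{\bar{u}_0}$ doubly stochastic, and finishes with Schur concavity of $\tilde{q}\mapsto\sum_i\tilde{p}_i^{\alpha}\tilde{q}_i^{1-\alpha}$. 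Where you genuinely diverge is the maximum \eqref{3-9}. The paper's argument for $\alpha\in(0,1)$ rests on a Thompson-type decomposition $e^{A/2}e^{B}e^{A/2}=e^{U_1AU_1^{*}+U_2BU_2^{*}}$, the Golden--Thompson inequality together with its equality condition, a further Araki--Lieb--Thirring step, and a compactness/continuity argument to produce an optimizer; for $\alpha>1$ it uses Araki--Lieb--Thirring again, which, as you observe, only bounds $F_{\alpha,z}$ from above by $\left(\operatorname{Tr}[\rho^{\alpha}U\sigma^{1-\alpha}U^{*}]\right)^{1/\alpha}$ when $z\geq 1$, so the published proof is at its weakest exactly where your log-majorization step does the work. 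Your replacement --- the Gelfand--Naimark/Horn inequality giving $\lambda(M)\prec_{w\log}\lambda^{\downarrow}\bigl(\rho^{\alpha/z}\bigr)\cdot\lambda^{\downarrow}\bigl(\sigma^{(1-\alpha)/z}\bigr)$, combined with Weyl's principle that $\sum_i s_i^{z}$ is monotone under weak log-majorization for every $z>0$ (via convexity of $t\mapsto e^{zt}$) --- treats both ranges of $\alpha$ and all $z>0$ in one stroke, needs no existence-of-optimizer or Golden--Thompson equality-condition analysis (attainment is a direct computation at the aligned commuting $U$, where \eqref{2-2} applies), and is arguably tighter logically than the paper's version. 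The trade-off is that you import two standard but nontrivial facts from majorization theory, whereas the paper confines itself (not entirely successfully) to the inequalities it lists in Section~2.
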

		\begin{proof}
			Consider the spectral decomposition of $\rho$, $\sigma$ \(\in \mathcal{D(H}_d) \),
			\[
			\rho = \sum_{i=1}^{d}\lambda_{i}^{\downarrow}(\rho)|i\rangle\langle i| \text{ and }  \sigma = \sum_{i=1}^{d}\lambda_{i}^{\downarrow}(\sigma)W_0|i\rangle\langle i| W_{0}^*\text{,}
			\]
			where \(W_0\) is a unitary operator. Without loss of generality, we assume that \(\lambda_{i}^{\downarrow}(\rho)>0\)  and  \(\lambda_{i}^{\downarrow}(\sigma )> 0, i=1,...,d\). Otherwise, there exists an \(i\) such that \(\lambda_{i}^{\downarrow}(\rho)=0\) (resp. \(\lambda_{i}^{\downarrow}(\sigma)=0\)), considering  \(\lambda_{i}^{\downarrow}(\rho+\xi I)>0\) (resp. \(\lambda_{i}^{\downarrow}(\sigma+\xi I)>0)\).
			\\
			
			\noindent\textbf{(I)}. For Equation \eqref{3-9}, when \( z \in (0, \infty) \) and \( \alpha \in (0,1) \), taking 
			\begin{equation*}
				\begin{aligned}
					A = \frac{1 - \alpha}{z} U \ln(\sigma) U^\ast \text{ and }
					B = \frac{\alpha}{z} \ln(\rho).
				\end{aligned}
			\end{equation*}
			According to Golden-Thompson inequality, there exists a unitary operator \( U_1 \) and \( U_2 \) such that,
			\begin{equation}
				\begin{aligned}
					&\exp\left(\frac{A}{2}\right) \exp(B) \exp\left(\frac{A}{2}\right) \\&=\exp\left(U_1 A U_1^\ast + U_2 B U_2^\ast\right).
				\end{aligned}
			\end{equation}
			It follows that,
			\begin{equation*}
				\begin{aligned}
					&F_{\alpha,z}(\rho, U \sigma U^\ast) \\&= \left(\operatorname{Tr}\left(U \sigma^{\frac{1 - \alpha}{2z}} U^\ast \rho^{\frac{\alpha}{z}} U \sigma^{\frac{1 - \alpha}{2z}} U^\ast\right)^z\right)^{\frac{1}{\alpha}} \\
					&= \left( \operatorname{Tr}\left(\exp\left(\alpha U_2 \ln \rho U_2^\ast + \right. \right. \right. \\
					&\quad \left. \left. \left. (1 - \alpha) U_1 U \ln \sigma U^\ast U_1^\ast\right)\right)\right)^{\frac{1}{\alpha}}.
				\end{aligned}
			\end{equation*}
			
			Define \( \tilde{U} = U_2^\ast U_1 U \), then,
			\begin{equation*}
				\begin{aligned}
					&F_{\alpha,z}(\rho, U \sigma U^\ast) \\ &= \left(\operatorname{Tr}\left(\exp\left(\alpha \ln \rho + (1 - \alpha) \tilde{U} \ln \sigma \tilde{U}^\ast\right)\right)\right)^{\frac{1}{\alpha}}.
				\end{aligned}
			\end{equation*}
			Applying the Golden-Thompson inequality, we obtain
			\begin{equation*}
				\begin{aligned}
					F_{\alpha,z}(\rho, U \sigma U^\ast) &\leq \left(\operatorname{Tr}\left(\rho^{\alpha} \tilde{U} \rho^{1 - \alpha} \tilde{U}^\ast\right)\right)^{\frac{1}{\alpha}}.
				\end{aligned}
			\end{equation*}
			Next, using the Araki-Lieb-Thirring inequality, with \( r = q = \alpha \), \( A = \rho \), and \( B = \tilde{U} \sigma^{\frac{1 - \alpha}{\alpha}} \tilde{U}^\ast \), we derive that
			\[
			F_{\alpha,z}(\rho, U \sigma U^\ast) \leq F_{\alpha}(\rho, \tilde{U} \sigma \tilde{U}^\ast).
			\]
			The unitary group \( U\mathcal{(H}_d) \) is compact, and  when defined on the unitary group equipped with the operator norm, the map \( U \mapsto F_{\alpha,z}(\rho, U \sigma U) \) is continuous (as shown in the proof of Theorem \ref{th3.2}). We have demonstrated that this map is continuous with respect to the 1-norm. Moreover, since all norms are equivalent in finite-dimensional vector spaces, there exists a unitary operator \( U_0 \) for which the maximum value is attained. Thus, for the inequality established above, we obtain the desired result, that is, \( 0 < \alpha < 1 \),
			\begin{equation*}
				\begin{aligned}
					\max_{\substack{U \in U\mathcal{(H}_d)}} F_{\alpha,z}(\rho, U \sigma U^\ast) &= F_{\alpha,z}(\rho, U_0 \sigma U_0^\ast) \\ &= F_{\alpha}(\rho, \tilde{U}_0 \sigma \tilde{U}_0^\ast).
				\end{aligned}
			\end{equation*}
			Consequently,
			\begin{equation*}
				\begin{aligned}
					&\operatorname{Tr}[\exp(\alpha \ln \rho + (1 - \alpha) \tilde{U}_0 \ln \sigma \tilde{U}_0^\ast)] \\ &	= \operatorname{Tr}[\rho^{\alpha} \tilde{U}_0 \sigma^{1 - \alpha} \tilde{U}_0^\ast].
				\end{aligned}
			\end{equation*}
			By the equality condition of the Golden-Thompson inequality, we obtain,
			\[
			[\rho^\alpha, \tilde{U}_0 \sigma^{1 - \alpha} \tilde{U}_0^\ast] = 0.
			\]
			It follows that \(\tilde{U}_0 = W_0^\ast\), which implies
			\(
			[\rho, W_0^\ast \sigma W_0] = 0
			\).
			Thus, for \(0 < \alpha < 1\), the maximum of \(F_{\alpha,z}(\rho, U \sigma U^\ast)\) is achieved, then
			\[
			\max_{\substack{U \in U\mathcal{(H}_d)}} F_{\alpha,z}(\rho, U \sigma U^\ast) = F_{\alpha}(\rho, W_0^\ast \sigma W_0).
			\]
			Using \( [\rho^\alpha, W_0^\ast \sigma^{1 - \alpha} W_0] = 0 \), we obtain
			\[
			\max_{\substack{U \in U\mathcal{(H}_d)}} F_{\alpha,z}(\rho, U \sigma U^\ast) = F_{\alpha}^{C}(\lambda^\downarrow(\rho), \lambda^\downarrow(\sigma)).
			\]
			
			For \( z \in (0, \infty) \) and \( \alpha > 1 \), where \( 1 - \alpha \leq 0 \) and \( \lambda_i(\sigma)^{1 - \alpha} \leq \lambda_j(\sigma)^{1 - \alpha} \) for \( 1 \leq i \leq j \leq d \), we apply the Araki-Lieb-Thirring inequality with \( r = q = \alpha \), \( A = \rho \), and \( B = U \sigma^\frac{1-\alpha}{\alpha}U^*\)(\cite{RB,GHHJE}). It follows from Equation \eqref{2-8} that,
			\begin{equation*}
				\begin{aligned}
					F_{\alpha, z}(\rho, U \sigma U^\ast) &\leq \left( \operatorname{Tr} \left( \rho^\alpha U \sigma^{1 - \alpha} U^\ast \right) \right)^{\frac{1}{\alpha}} \\
					&\leq \left( \langle (\lambda^\downarrow(\rho))^\alpha, (\lambda^\uparrow(\sigma))^{1 - \alpha} \rangle \right)^{\frac{1}{\alpha}}.
				\end{aligned}
			\end{equation*}
			Therefore, from Equation \eqref{2-6}, we have
			\begin{equation*}
				\begin{aligned}
					&\max_{\substack{U \in U\mathcal{(H}_d)}}F_{\alpha, z}(\rho, U\sigma U^*) \\&\leq \max_{\substack{U \in U\mathcal{(H}_d)}}\left( \operatorname{Tr} \left( \rho^{\alpha} U \sigma^{1-\alpha} U^* \right) \right)^{\frac{1}{\alpha}} \\
					&\leq \left( \langle (\lambda^{\downarrow}(\rho))^\alpha, (\lambda^{\uparrow}(\sigma))^{1-\alpha} \rangle \right)^{\frac{1}{\alpha}}.
				\end{aligned}
			\end{equation*}
			When we set \( U W_0 |i\rangle = |d - i + 1\rangle \), the above upper bound will be achieved. 
			\par\vspace{10pt} 
			\noindent\textbf{(II)}. For Equation \eqref{3-10}, when \( z \in (0, 1) \) and \( \alpha \in (0, 1) \), we know that the following relationship holds from Equation \eqref{2-8}, when \(\alpha \in (0,1)\).
			\begin{equation*}
				\begin{aligned}\langle (\lambda^{\downarrow}(\rho))^\alpha, (\lambda^{\uparrow}(\sigma))^{1-\alpha} \rangle \leq  \operatorname{Tr} \left( \rho^{\alpha} U \sigma^{1-\alpha} U^* \right).
				\end{aligned}
			\end{equation*}
			Letting $A = U \sigma^{1-\alpha}U^*$, $B = \rho^{\alpha}$ with $r = \frac{1}{z}, q=1$, by Araki-Lieb-Thirring inequality, then
			\begin{equation*}
				\begin{aligned}
					&F_{\alpha, z}(\rho, U\sigma U^*) \\&= \left( \operatorname{Tr} \left( \rho^\frac{\alpha}{2z} U \sigma^\frac{1-\alpha}{z} U^*\rho^\frac{\alpha}{2z} \right)^z \right)^{\frac{1}{\alpha}} \\ 
					&\geq \left( \operatorname{Tr} \left( \rho^{\alpha} U \sigma^{1-\alpha} U^* \right) \right)^{\frac{1}{\alpha}},z \in (0,1).
				\end{aligned}
			\end{equation*}
			Hence,
			\begin{equation*}
				\begin{aligned}
					\min_{\substack{U}}F_{\alpha, z}(\rho, U\sigma U^*) &\geq \min\left( \operatorname{Tr} \left( \rho^{\alpha} U \sigma^{1-\alpha} U^* \right) \right)^{\frac{1}{\alpha}} \\
					&= \left( \langle (\lambda^{\downarrow}(\rho))^\alpha, (\lambda^{\uparrow}(\sigma))^{1-\alpha} \rangle \right)^{\frac{1}{\alpha}}.
				\end{aligned}
			\end{equation*} 
			The above lower bound will be got by taking $UW_0|i\rangle = |d-i+1\rangle $. Hence, we prove that Equation \eqref{3-10} holds for \( z \in (0,1) \) and \( \alpha \in (0,1) \).
			\par\vspace{10pt} 
			Similarly, for $z\in(0,1)$ and $z\in(1,\infty)$, we know that when \(\alpha \in (1,\infty)\).
			\begin{equation*}
				\begin{aligned}
					\langle (\lambda^{\downarrow}(\rho))^\alpha, (\lambda^{\downarrow}(\sigma))^{1-\alpha} \rangle
					\leq \operatorname{Tr} \left( \rho^{\alpha} U \sigma^{1-\alpha} U^*  \right). 
				\end{aligned}
			\end{equation*}
			Since,
			\begin{equation*}
				\begin{aligned}
					F_{\alpha, z}(\rho, U\sigma U^*) &= \left( \operatorname{Tr} \left( \rho^\frac{\alpha}{2z} U \sigma^\frac{1-\alpha}{z} U^*\rho^\frac{\alpha}{2z} \right)^z \right)^{\frac{1}{\alpha}} \\ 
					&\geq \left( \operatorname{Tr} \left( \rho^{\alpha} U \sigma^{1-\alpha} U^* \right) \right)^{\frac{1}{\alpha}}\text{,}
				\end{aligned}
			\end{equation*}
			we obtain
			\begin{equation*}
				\begin{aligned}
					\min_{\substack{U}}F_{\alpha, z}(\rho, U\sigma U^*) &\geq \min \left( \operatorname{Tr} \left( \rho^\alpha U \sigma^\frac{1-\alpha}{z} U^* \right)\right)^{\frac{1}{\alpha}} \\ 
					&= \left( \langle (\lambda^{\downarrow}(\rho))^\alpha, (\lambda^{\downarrow}(\sigma))^{1-\alpha} \rangle \right)^{\frac{1}{\alpha}}.
				\end{aligned}
			\end{equation*}
			The above lower bound will be derived by taking $U = W_0^*$.
			\par \par\vspace{10pt} 
			\noindent\textbf{(III)}. For Equation \eqref{3-11}, when \(\alpha, z \in  \{(\alpha, z) \mid 1 < \alpha \leq 2, \frac{\alpha}{2}\leq z \leq \alpha \}\bigcup \{(\alpha, z) \mid 2 \leq \alpha < \infty, \alpha-1 \leq z \leq \alpha\}\), we first assume that \(\bar{U}_0\) is a unitary obrit such that
			\begin{equation*}
				F_{\alpha,z}(\rho,\bar{U}_0\sigma\bar{U}_0^*)=\min_{\substack{U}}F_{\alpha,z}(\rho,U\sigma U^*)\text{,}
			\end{equation*}
			then,
			\begin{equation*}
				F_{\alpha,z}(\rho,\bar{U}_0\sigma\bar{U}_0^*) \leq F_{\alpha,z}(\rho,W_0^*\sigma W_0)\text{,}
			\end{equation*}
			where \({W_0^*\sigma W_0 = \sum_{i=1}^{d}\lambda_i^{ \downarrow}(\sigma)\mid i\rangle \langle i\mid} \text{ and }W_0^*\sigma W_0= 0\). Hence, 
			\begin{equation}\label{3-12}
				\begin{aligned}
					F_{\alpha,z}(\rho,\bar{U}_0\sigma\bar{U}_0^*) &\leq F_{\alpha,z}(\rho,W_0^*\sigma W_0) 
					\\&= F_{\alpha}^{C}(\lambda^{\downarrow}(\rho),(\lambda^{\downarrow}(\sigma)).
				\end{aligned}
			\end{equation}
			In addition, for \(\alpha\in (1,\infty)\), denote \(\mathcal{S}=\{\tilde{p}=(\tilde{p}_1,\tilde{p}_2, \ldots, \tilde{p}_d) \in \mathbb{R}^d: \tilde{p}_1 \geq \tilde{p}_2 \geq \cdots \geq \tilde{p}_d \}\), one can define a Schur concave (see Proposition \ref{pp2.1}) function \(\varphi : \mathcal{S} \to \mathbb{R}\) by
			\[
			\phi(\tilde{q}) = \sum_{i=1}^{d} \tilde{p}_i^{\alpha} \tilde{q}_i^{1-\alpha}\text{,}
			\]
			thus, for probability distributions \(\tilde{p}\), \(\tilde{q}\), and \(\tilde{q}' \in \mathcal{S}\), we have
			\begin{equation}\label{3-13}
				\begin{aligned}
					F_\alpha^C(\tilde{p},\tilde{q})\geq  F_\alpha^C(\tilde{p},\tilde{q}'), \text { if } \tilde{q} \leq \tilde{q}'.
				\end{aligned}
			\end{equation}
			Moreover, defining a quantum channel on \(\mathcal{D(H}_d)\) as follows:
			\[
			\Phi(K):=\sum_{i=1}^{d}\langle i\mid K\mid i\rangle  \langle i\mid, K\in \mathcal{D(H}_d).
			\]
			Noting that \(\Phi(\rho) = \rho\) and \([\rho, \Phi(K)] = 0\). Thus, the Data processing inequality yields,
			\begin{align*}
				F_{\alpha, z}(\rho, \bar{U}_0 \sigma \bar{U}_0^*) &\geq F_{\alpha, z}(\rho, \Phi(\bar{U}_0 \sigma \bar{U}_0^*)) \\
				&= F_{\alpha}^{C}(\lambda^{\downarrow}(\rho), D_{\bar{u}_0} \lambda^{\downarrow}(\sigma)) \\
				&\geq F_{\alpha}^{C}(\lambda^{\downarrow}(\rho), (D_{\bar{u}_0} \lambda^{\downarrow}(\sigma))^\downarrow)\text{,}
			\end{align*}
			where the first equation from \([\rho,\Phi(\bar{U}_0 \sigma \bar{U}_0^*)] = 0\) and Equation \eqref{2-2}. Meanwhile, the last inequality from Equation \eqref{2-8}, where \(D_{\bar{u}_0}\) is a \(d \times d\) doubly stochastic matrix defined by \((D_{\bar{u}_0})_{ij}=\mid\langle i\mid\bar{U}_0 W_0\mid j\rangle\mid^2\).
			\par It is clear that \(D_{\bar{u}_0} \lambda^{\downarrow}(\sigma)\leq \lambda^{\downarrow}(\sigma)\). Thus, by Equation \eqref{3-13}, we can get 
			\begin{equation}\label{3-14}
				\begin{aligned}
					F_\alpha^C(\lambda^{\downarrow}(\rho),(D_{\bar{u}_0} \lambda^{\downarrow}(\sigma))^\downarrow)\geq  F_\alpha^C(\lambda^{\downarrow}(\rho),\lambda^{\downarrow}(\sigma)).
				\end{aligned}
			\end{equation}
			Combing Equations \eqref{3-12} and \eqref{3-14}, we give rise to 
			\[
			\min_{\substack{U}\in U\mathcal{(H}_d)}F_{\alpha, z}(\rho, U\sigma U^*) = F_\alpha^C(\lambda^{\downarrow}(\rho),\lambda^{\downarrow}(\sigma))\text{,}
			\]
			thus, \( \min_{\substack{U}\in U(H_d)}F_{\alpha, z}(\rho, U\sigma U^*)\) is deduced with \(U= W_0^*\).
		\end{proof}
		%\subsection{References and footnotes}
		In quantum mechanics, a system evolves via unitary transformations generated by a Hamiltonian \( H \), represented as 
		\(U_t = e^{i t H}\) for \( t \in \mathbb{R} \). Unitary operators are continuous in \( t \), and \( U_t \) is path-connected to the identity operator through
		\(U_t = e^{t L}\) where \( L \) is a skew-Hermitian matrix. Given this unitary evolution, we analyze the function 
		\(F_\alpha(\rho, U_t \sigma U_t^\dagger)\), 
		which encapsulates certain properties of the states \( \rho \) and \( \sigma \) transformed under the action of \( U_t \). The continuity of \( U_t \) implies that \( F_\alpha(\rho, U_t \sigma U_t^\dagger) \) inherits this continuity with respect to \( t \).

		\begin{theorem} \label{th3.2}
			For \(\rho,\sigma\in \mathcal{D(H}_d))\), the set \(F_{\alpha, z}(\rho, U\sigma U^*), U\in U\mathcal{(H}_d))\) is identical to the interval, When \(\alpha, z \in \{(\alpha, z) \mid 1 < \alpha \leq 2, \frac{\alpha}{2}\leq z \leq \alpha \}\bigcup \{(\alpha, z) \mid 2 \leq \alpha < \infty, \alpha-1 \leq z \leq \alpha\}\),
			\begin{equation*}
				\begin{aligned}
					[F_\alpha^C(\lambda^{\downarrow}(\rho),\lambda^{\downarrow}(\sigma)), F_\alpha^C(\lambda^{\downarrow}(\rho),\lambda^{\uparrow}(\sigma))]\text{,}
				\end{aligned}
			\end{equation*}
			or
			\begin{equation*}
				\begin{aligned}
					[F_\alpha^C(\lambda^{\downarrow}(\rho),\lambda^{\uparrow}(\sigma)), F_\alpha^C(\lambda^{\downarrow}(\rho),\lambda^{\downarrow}(\sigma))], z,\alpha\in (0,1)\text{,}
				\end{aligned}
			\end{equation*}
		\end{theorem}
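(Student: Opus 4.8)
The plan is to recognize this statement as an application of the intermediate value theorem combined with the extremal values already established in Theorem~\ref{th3.1}. Write $\Psi(U) = F_{\alpha,z}(\rho, U\sigma U^*)$ for the map from the unitary group $U\mathcal{(H}_d)$ into $\mathbb{R}$. By Theorem~\ref{th3.1}, $\Psi$ attains a global minimum $m$ and a global maximum $M$: in the convexity regime $m = F_\alpha^C(\lambda^{\downarrow}(\rho),\lambda^{\downarrow}(\sigma))$ and $M = F_\alpha^C(\lambda^{\downarrow}(\rho),\lambda^{\uparrow}(\sigma))$, whereas for $\alpha,z\in(0,1)$ the roles of these two quantities as endpoints are interchanged. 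Consequently the image $\Psi(U\mathcal{(H}_d))$ is contained in $[m,M]$; the entire content of the theorem is the reverse inclusion, namely that every intermediate value is actually attained, so that the image fills the whole closed interval.

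To establish surjectivity onto $[m,M]$, I would first verify that $\Psi$ is continuous. The map $U \mapsto U\sigma U^*$ is continuous in the operator norm, and the fractional powers $(\cdot)^{\frac{1-\alpha}{2z}}$ and $(\cdot)^{\frac{\alpha}{z}}$, together with the $z$-th power of the positive operator appearing inside the trace, are continuous via the continuous functional calculus on positive definite operators; composing these with matrix multiplication, the trace, and the outer exponent $(\cdot)^{1/\alpha}$ yields continuity of $\Psi$. As in the proof of Theorem~\ref{th3.1}, any degeneracy or rank deficiency is absorbed by the perturbation $\rho \to \rho + \xi I$ and $\sigma \to \sigma + \xi I$ followed by $\xi \to 0^+$, which keeps all operators positive definite and preserves continuity.

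Next I would invoke the connectedness of the unitary group. Every $U \in U\mathcal{(H}_d)$ can be written as $U = e^{iH}$ with $H$ Hermitian, so $t \mapsto e^{itH}$ is a continuous path joining the identity to $U$; hence $U\mathcal{(H}_d)$ is path-connected, and in particular connected. A continuous real-valued function on a connected space has an image that is an interval, so $\Psi(U\mathcal{(H}_d))$ is an interval. Since $U\mathcal{(H}_d)$ is moreover compact, its image under $\Psi$ is closed and bounded; combined with the explicit extremal values from Theorem~\ref{th3.1}, this forces $\Psi(U\mathcal{(H}_d)) = [m,M]$, which is precisely the interval asserted in each of the two parameter regimes.

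The main obstacle I anticipate is technical rather than conceptual: carefully justifying that $\Psi$ remains continuous through the regularization $\xi \to 0^+$, since the operator powers with exponents depending on $\alpha$ and $z$ must stay well-defined and continuous as eigenvalues approach zero. Once continuity is secured, the interplay of connectedness (supplying the interval and intermediate-value property) and compactness (supplying closedness and the attainment of the endpoints) delivers the conclusion at once, with Theorem~\ref{th3.1} furnishing the explicit endpoints $m$ and $M$.
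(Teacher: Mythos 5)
Your proposal is correct, and its skeleton matches the paper's: endpoints from Theorem~\ref{th3.1}, continuity of $U \mapsto F_{\alpha,z}(\rho,U\sigma U^*)$, then an intermediate-value/connectedness argument. Where you genuinely diverge is in how the continuity --- the bulk of the paper's proof --- is established, and in the topological packaging. The paper joins a minimizer $e^{L_0}$ to a maximizer $e^{L_1}$ by the explicit path $U_t=e^{(1-t)L_0+tL_1}$ and proves continuity of $t\mapsto\operatorname{Tr}(A_t^z)$ by hand: a trace-norm bound $\|A_{t+\delta}-A_t\|_1\le C\varepsilon$ from submultiplicativity, the integral representation $A^z=\frac{\sin(z\pi)}{\pi}\int_0^\infty x^z\left(\frac{1}{x}-\frac{1}{x+A}\right)dx$ (valid only for $z\in(0,1)$), a resolvent estimate, and finally a doubling/bootstrap argument extending continuity from $z\in(0,1)$ to $z\in(0,2^n)$ for every $n$. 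You compress all of this into the standard fact that the continuous functional calculus $A\mapsto A^z$ is continuous on positive definite matrices in finite dimensions, composed with trivially continuous maps ($U\mapsto \rho^{\alpha/2z}U\sigma^{(1-\alpha)/z}U^*\rho^{\alpha/2z}$, the trace, and $x\mapsto x^{1/\alpha}$), and you replace the explicit path by path-connectedness of the whole unitary group. Your route is shorter, treats all $z>0$ uniformly, and avoids the bootstrap; what it gives up is the paper's self-contained, quantitative (Lipschitz-type) modulus-of-continuity estimates, which the paper's subsequent Remark highlights as a point of independent interest. Two minor observations: compactness is redundant in your argument, since Theorem~\ref{th3.1} already gives attainment of both endpoints and connectedness alone then forces the image to equal $[m,M]$; and the $\xi\to 0^+$ regularization is needed only to make $\sigma^{(1-\alpha)/z}$ meaningful for singular $\sigma$ when $\alpha>1$ --- since $U\sigma U^*$ has the same spectrum as $\sigma$, continuity in $U$ at fixed positive definite $\sigma$ is not threatened by that limit.
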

		\begin{proof}
			According to Theorem \ref{th3.1}, the function \( F_{\alpha, z}(\rho, U \sigma U^*) \) achieves its maximum at \( U = e^{L_1} \) and its minimum at \( U = e^{L_0} \), where \( L_0 \) and \( L_1 \) are skew-symmetric operators obtained from Stone's theorem(\cite{JBC}).
			\par 
			To investigate further, consider the following function of \( t \),
			\[
			t \mapsto F_{\alpha, z} \left( \rho, e^{(1-t)L_0 + tL_1} \sigma e^{-(1-t)L_0 - tL_1} \right).
			\]
			To apply the intermediate value theorem, it is sufficient to show that this function is continuous with respect to \( t \) on the interval \([0, 1]\). Define the operator \( A_t \) as,
			\[
			A_t = \rho^\frac{\alpha}{2z} U_t \sigma^\frac{1-\alpha}{z} U_t^* \rho^\frac{\alpha}{2z}\text{,}
			\]
			where \( U_t = e^{(1-t)L_0 + tL_1} \) represents a continuous path in the unitary group. Then,
			\[
			\text{Tr}(A_t^z) = F_{\alpha, z} \left( \rho, U_t \sigma U_t^* \right).
			\]
			Thus, to establish continuity of \( t \mapsto \text{Tr}(A_t^z) \), it suffices to demonstrate that this function is continuous with respect to \( t \) for \( z \in (0, \infty) \). Without loss of generality, we assume that all involved operators are positive definite. This will be verified by showing that \( t \mapsto \text{Tr}(A_t^z) \) is continuous, leveraging the fact that \( U_t \) forms a continuous path in the unitary matrix group.
			\par Firstly, by Stone’s theorem, for all \( \forall\varepsilon > 0 \), there exists a \(\delta > 0 \) such that \(\|U_{t+\delta} - U_t\|_1\leq\frac{\varepsilon}{2}\). 
			Applying the submultiplicativity of norm(\cite{JW}), we have
			\begin{align*}
				&\|A_{t+\delta} - A_t\|_1 \\&= 
				\|\rho^{\frac{\alpha}{2z}} \Big[(U_{t+\delta} - U_t)\sigma^{\frac{1-\alpha}{z}}U_{t+\delta}^* + U_t\sigma^{\frac{1-\alpha}{z}}\\&(U_{t+\delta}^* - U_t^*)\Big]\rho^{\frac{\alpha}{2z}}\|_1 \\
				&\leq \|\rho^{\frac{\alpha}{2z}}\|_\infty^2 \Big(\|U_{t+\delta}-U_t\|_1 \cdot \|\sigma^{\frac{1-\alpha}{z}}U_{t+\delta}^*\|_\infty \\&+ \|U_t\sigma^{\frac{1-\alpha}{z}}\|_\infty \cdot \|U_{t+\delta}^*-U_t^*\|_1\Big) \\
				&\leq \|\rho^{\frac{\alpha}{2z}}\|_\infty^2 \|\sigma^{\frac{1-\alpha}{z}}\|_\infty \Big(\|U_{t+\delta} - U_t\|_1 \\&+ \|U_{t+\delta}^* - U_t^*\|_1\Big) \\
				&\leq C \varepsilon \text{,}
			\end{align*}
			where \( C \) is a constant that depends on \(\alpha\) and \( z \).
			\par Furthermore, for \( A \in \mathcal{P}\mathcal{(H}_d)) \) and \( z \in (0, 1) \), we know
			\[
			A^z= \frac{\sin(z\pi)}{\pi}\int_{0}^{\infty}x^z(\frac{1}{x}-\frac{1}{x+A})dx
			\]
			By \( A^{-1} - B^{-1} = A^{-1}(B - A)B^{-1} \), we have	\begin{align*}
				&\|A_{t +\delta}^z -U_t^z\|_1 \\&= \|\frac{\sin(z\pi)}{\pi}\int_{0}^{\infty}x^z(\frac{1}{x+A_t}-\frac{1}{x+A_{t+\delta}})dx\|_1 \\&= \|\frac{\sin(z\pi)}{\pi}\int_{0}^{\infty}x^z\frac{1}{x+A_t}(A_{t+\delta}-A_t)\frac{1}{x+A_{t+\delta}}dx\|_1 \\&\leq \frac{1}{\pi}\int_{0}^{\infty}x^z E dx.
			\end{align*}
			where \(E=\|\frac{1}{x+A_t}\|_
			\infty\cdot\|A_{t+\delta}-A_t\|_1  \cdot \|A_{t+\delta}-A_t\|_\infty\), and for the operator \({A_t>0}\), taking \( \lambda(A_t) = \min\{\lambda_1(A_t), \lambda_2(A_t), \ldots, \lambda_d(A_t)\}\), then
			\[
			\|\frac{1}{x+A_t}\|_\infty = \frac{1}{x+\lambda_d^\downarrow(A_t)} \text{, }x\geq 0.
			\]
			Hence
			\begin{align*}
				&\|(A_{t+\delta}^z - A_t^z)\|_1\\ &\leq \frac{1}{\pi} \|A_{t+\delta} - A_t\|_1 \int_{0}^{\infty} x^z \left(\frac{1}{x + \alpha}\right)^2 dx \\
				&\leq \frac{1}{\pi} \varepsilon \int_{0}^{\infty} x^z \left(\frac{1}{x + \alpha}\right)^2 dx \\
				&\leq \frac{1}{\pi} \varepsilon \int_{0}^{\infty} \frac{1}{y^{2-z}}dy :=b\varepsilon\text{,}
			\end{align*}
			\noindent where denote \( y= x + \alpha \) , \(\alpha = \min\{\lambda_d^\downarrow(A_t), \lambda_d^\downarrow(A_{t+\delta})\}\text{, and } b= \frac{1}{\pi} \int_a^{\infty} \frac{1}{y^{2-z}}dy \). It is easy to see that the integral \( \int_a^{\infty} \frac{1}{y^{2-z}}dy < \infty \) with \( z \in (0, 1) \). Thus
			\[
			|\operatorname{Tr} (A_{t+\delta}^z) - \operatorname{Tr} (A_t^z)| \leq \|A_{t+\delta}^z - A_t^z\|_1\leq b\varepsilon\text{,}
			\]
			\noindent hence the function \( t\to \operatorname{Tr}(A_t^z)\) is continuous with respect to \( t \) when \( z \in (0, 1) \).
			\par Indeed, \( \operatorname{Tr}(A_t^z)\) is continuous with respect to \( t \) for \( z > 0 \). The following inequality shows that \( \operatorname{Tr}(A_t^{2z})\) is continuous in \( t \) when \( z \in (0, 1) \),
			\begin{align*}
				&|\operatorname{Tr} (A_{t+\delta}^{2z}) - (A_t^{2z})| \\& \leq \|A_{t+\delta}^{2z} - A_t^{2z}\|_1 \\& = \|A_{t+\delta}^zA_{t+\delta}^z - A_t^zA_t^z\|_1
				\\&\leq \|A_{t+\delta}^z(A_{t+\delta}^z - A_t^z)\|_1 +  \|(A_{t+\delta}^z - A_t^z)\|_1 \\&\leq (\|A_{t+\delta}^z\|_\infty + \|A_t^z\|_\infty)\|(A_{t+\delta}^z - A_t^z)A_t^z)\|_1
				\\&\leq c\varepsilon\text{,}
			\end{align*}
			\noindent where \( c = b (\|A_{t+\delta}^z\|_\infty + \|A_t^z\|_\infty) < \infty \). Thus, \( \operatorname{Tr}(A_t^{2z}) \) is continuous with respect to \( t \) for \( z \in (0, 1) \). That is, \( \operatorname{Tr}(A_t^z) \) is continuous in \( t \) for \( z \in (0, 2) \). By repeating the previous method, we can show that \( \operatorname{Tr}(A_t^z) \) is continuous with respect to \( t \) when \( z \in (0, 2^n) \), \( n \in \mathbb{N} \). Hence, \( \operatorname{Tr}(A_t^z) \) is continuous with respect to \( t \) for each \( z > 0 \), which shows our conclusion.
		\end{proof}
		
		\begin{remark}
			
			The continuity of \( t \to \operatorname{Tr}(A_t^z) \) with respect to \( t \) for \(z > 0\) is crucial for applying the intermediate value theorem. The detailed proof leverages properties of matrix norms and functional calculus for positive definite operator. Notably, the argument demonstrates that the continuity holds across all positive \( z \), showing the robustness of the result. The method relies on bounding differences in matrix functions and norms, which ensures that small changes in \( t \) lead to small changes in the trace function, confirming the desired continuity.

			For \( \alpha \in (0, 1) \cup (1, \infty) \) and \( z \in (0, \infty) \), the sandwiched quantum \( \alpha \)-$z$-R\'enyi Relative Entropy is defined in Ref\cite{KMRA}.
			\begin{equation*}
				\begin{aligned}
					&S_{\alpha,z}(\rho\|\sigma)\\&=
					\left\{
					\begin{array}{ll}
						\frac{\alpha}{\alpha-1}\log F_{\alpha,z}(\rho,\sigma)\text{,}& \text{supp}(\rho)\subseteq \text{supp}(\sigma)\text{,} \\
						\infty\text{,} & \text{otherwise} .
					\end{array}
					\right.
				\end{aligned}
			\end{equation*}
			For \( \tilde{p}=\) \(\{\tilde{p}_1, \ldots, \tilde{p}_d\}  \text{ and } \tilde{ q}= \{\tilde{q}_1, \ldots, \tilde{q}_d\} \), two probability distributions, the classical \(\alpha\)-Rényi relative entropy is described in Ref\cite{AR}.
			\begin{equation*}
				\begin{aligned}
					S_{\alpha,z}^C(\tilde{p}\|\tilde{q}) = \frac{\alpha}{\alpha-1}\log F_\alpha^C(\tilde{p},\tilde{q}) = 	S_\alpha^C(\tilde{p}\|\tilde{q}).
				\end{aligned}
			\end{equation*}
			\noindent Easily, we have the following consequence.
		\end{remark}
		
		\begin{corollary}\label{cl3.3} For two states $\rho,\sigma\in \mathcal{D(H}_d)$, where $\sigma$ is full-ranked, we have
			\begin{equation*}
				\begin{aligned}
					\max_{U\in U\mathcal{(H}_d)}S_{\alpha,z}(\rho\|U_\sigma) =S_{\alpha,z}^C(\lambda^\downarrow(\rho)\|\lambda^\uparrow(\sigma))\text{,}
				\end{aligned}
			\end{equation*}
			for \(\alpha,z\in \{(\alpha,z)|\alpha,z\in (0,1)\}\cup\{\alpha,z\in (1,\infty)\text{,} z\in(0,\infty)\}\).
			\begin{equation*}
				\begin{aligned}
					\min_{U\in U\mathcal{(H}_d)}S_{\alpha,z}(\rho\|U_\sigma) =S_{\alpha,z}^C(\lambda^\downarrow(\rho)\|\lambda^\downarrow(\sigma)), 
				\end{aligned}
			\end{equation*}
			\(\text{and for, }\alpha, z \in \{ (\alpha, z) \mid \alpha \in (0,1),(1,\infty), z \in (0, 1)\} \cup \{(\alpha, z) \mid 1 < \alpha \leq 2, \frac{\alpha}{2}\leq z \leq \alpha \}\bigcup \{(\alpha, z) \mid 2 \leq \alpha < \infty, \alpha-1 \leq z \leq \alpha\}\).
		\end{corollary}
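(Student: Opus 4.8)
The plan is to treat $S_{\alpha,z}(\rho\|U\sigma U^*)$ as a strictly monotone reparametrization of the fidelity $F_{\alpha,z}(\rho,U\sigma U^*)$ and then transport the extremal values already computed in Theorem \ref{th3.1}. First I would observe that since $\sigma$ is full-ranked, so is $U\sigma U^*$ for every unitary $U$; hence $\operatorname{supp}(\rho)\subseteq\operatorname{supp}(U\sigma U^*)=\mathcal{H}_d$, and the entropy is always given by the finite branch $S_{\alpha,z}(\rho\|U\sigma U^*)=\frac{\alpha}{\alpha-1}\log F_{\alpha,z}(\rho,U\sigma U^*)$, so no term ever jumps to $\infty$ and the optimization stays within this branch. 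Because $\log$ is strictly increasing, the composite map $x\mapsto\frac{\alpha}{\alpha-1}\log x$ is strictly monotone on $(0,\infty)$: increasing when $\alpha>1$ (positive coefficient) and decreasing when $0<\alpha<1$ (negative coefficient). This sign of $\alpha-1$ is the single fact controlling whether optimizing $S$ follows or reverses the optimization of $F$.

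The second step is a case split keyed to this sign. For $\alpha>1$ the transformation is increasing, so $\max_U S$ is attained where $\max_U F$ is and $\min_U S$ where $\min_U F$ is. Feeding in Theorem \ref{th3.1}: $\max_U F=F_\alpha^C(\lambda^\downarrow(\rho),\lambda^\uparrow(\sigma))$ from Equation \eqref{3-9} (valid for all $z\in(0,\infty)$) gives $\max_U S=S_{\alpha,z}^C(\lambda^\downarrow(\rho)\|\lambda^\uparrow(\sigma))$, while $\min_U F=F_\alpha^C(\lambda^\downarrow(\rho),\lambda^\downarrow(\sigma))$ from Equation \eqref{3-10} (on $z\in(0,1)$) together with Equation \eqref{3-11} (on the stated convexity band) yields $\min_U S=S_{\alpha,z}^C(\lambda^\downarrow(\rho)\|\lambda^\downarrow(\sigma))$. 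For $0<\alpha<1$ the transformation is decreasing, so the roles swap: $\max_U S$ comes from $\min_U F=F_\alpha^C(\lambda^\downarrow(\rho),\lambda^\uparrow(\sigma))$ (Equation \eqref{3-10}, $z\in(0,1)$), and $\min_U S$ from $\max_U F=F_\alpha^C(\lambda^\downarrow(\rho),\lambda^\downarrow(\sigma))$ (Equation \eqref{3-9}). In each instance the identity $S_{\alpha,z}^C=S_\alpha^C$ recorded in the preceding Remark lets me present the answer in the claimed form.

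The final bookkeeping step is to verify that the union of parameter regions produced by this case analysis matches the region asserted in the statement: the $\max$ claim over $\{\alpha,z\in(0,1)\}\cup\{\alpha\in(1,\infty),z\in(0,\infty)\}$, and the $\min$ claim over the same $z\in(0,1)$ strip together with the additional band where Equation \eqref{3-11} supplies $\min_U F$. I would check that each branch invoked above is used precisely on the interval where Theorem \ref{th3.1} established it, so that no gap or overlap arises.

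I expect the only real subtlety — rather than a genuine obstacle — to be this alignment of domains: one must track that the $z$-restrictions inside Theorem \ref{th3.1} (the cut at $z=1$ in Equation \eqref{3-10} and the convexity band in Equation \eqref{3-11}) are inherited correctly after the sign flip, so that, for example, the $\min_U S$ statement for $\alpha>1$ legitimately covers both $z\in(0,1)$ and the band $\frac{\alpha}{2}\le z\le\alpha$. Everything else follows immediately from the monotone relationship between $S_{\alpha,z}$ and $F_{\alpha,z}$ and requires no new estimates.
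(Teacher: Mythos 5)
Your proposal is correct and is essentially the paper's own (implicit) argument: the paper states the corollary as an immediate consequence of Theorem \ref{th3.1} via the relation $S_{\alpha,z}(\rho\|\sigma)=\frac{\alpha}{\alpha-1}\log F_{\alpha,z}(\rho,\sigma)$, with the sign of $\alpha-1$ deciding whether extrema are preserved or swapped, exactly as you argue. Your case split and the domain bookkeeping (Equations \eqref{3-9}, \eqref{3-10}, \eqref{3-11} matched to the stated parameter regions) are consistent with the corollary, so nothing further is needed.
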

		\begin{remark} 
			The above implications and theorem elucidate the relationship between the sandwiched quantum $\alpha$-$z$-Rényi relative entropy $S_{\alpha,z}(\rho\|\sigma)$ and its classical counterpart $S_{\alpha,z}^C(\tilde{p}\|\tilde{q})$. %Specifically, Theorem 3.3 shows how maximizing and minimizing over unitary operators $U$ affects the relationship between quantum and classical relative entropy. 
			For different ranges of $\alpha$ and $z$, the maximum and minimum values of $S_{\alpha,z}(\rho\|U_\sigma)$ are expressed in terms of the classical $\alpha$-Rényi relative entropy of the eigenvalue distributions of $\rho$ and $\sigma$. This highlights the close interplay between quantum and classical Rényi relative entropy under various parameter conditions.
		\end{remark}
		\section{$\alpha$-z-fidelity of other quantum channel evolutions}
		Inspired by the unitary orbits mentioned above, we will consider the extremums under the evolution of several equally important orbits. For example, the focus is mainly on the following three physically realizable channels:
		\par
		(1)  All quantum channels. Completely positive trace-preserving linear maps.
		\par
		(2)  Unital quantum channels. A quantum channel \(\Phi\) is called unital if \(\Phi(I) = I\).
		\par
		(3) Mixed unitary channels. Where \(U_i\) are unitary operators and \(p_i\) are probabilities, a quantum channel \(\Psi\) is called mixed unitary if \(\Psi(\cdot) = \sum_i p_i U_i (\cdot) U_i^\dagger\).
		\par
		The main work of this section is to study the extremal value problem of arbitrary quantum states \(\rho\) and \(\sigma\) under different quantum channel evolutions.
		\par 
		Additionally, using the closed convex sets of density operators and quantum channels, a new geometric interpretation of the \(\alpha\)-$z$-fidelity is provided. By treating density operators as normalized projections on subspaces, we further study the geometric property of 
		\(\alpha\)-$z$-fidelity as a measure of distance between two spaces.
		%In practical applications, to improve the quality of quantum information state transfer, one aims to maximize \(F_{\alpha,z}(\rho, \sigma)\). However, studying the minimum \(\alpha\)-z-fidelity is also useful, as it represents the worst-case scenario in quantum information transmission.
		\par 
		Considering the three types of quantum channels mentioned above, we can determine the extremal values of their \(\alpha\)-$z$-fidelity. We focus on unital quantum channels. Using optimization theory, we find the extremal values of the quantum \(\alpha\)-$z$-fidelity \(F_{\alpha,z}(\rho, \Phi(\sigma))\) between  quantum states \(\rho\) and \(\sigma\), where \(\sigma\) is transmitted through any unital quantum channel.
		
		%\begin{remark}In \cite{LJP}, Jin Li et al. studied the minimum fidelity between a fixed state $\rho$ and a state $\sigma$ sent through quantum channels, analyzing different types of channels and providing a geometric interpretation related to density matrices and quantum channels. They also sudtied the geometric interpretation of fidelity as a measure of ``distance'' between two spaces and its connection to the regular angles between subspaces. Inspired by this research, we consider the $F_{a,z}$ quantum fidelity between two quantum states $\rho$ and $\sigma$, examining the extremal problems of other important quantum channel evolutions within different ranges of $(a,z)$, and providing a geometric interpretation of $F_{a,z}$ within that range.\end{remark}
		
		\begin{theorem}\label{th4.1}
			Assuming that \(\Phi\) is a unital quantum channel, the maximum and minimum values of the \(\alpha\)-z-fidelity between \(\rho\) and \(\Phi(\sigma)\) are as follows
			when \(\alpha \in (0, 1), z\in(0, \infty)\),
			\begin{equation}\label{4-1}
				\begin{aligned}
					F^\text{C}_{\alpha,z}(\lambda_j^\uparrow(\rho), \lambda_j^\downarrow(\Phi(\sigma))) \geq F^\text{C}_{\alpha,z}(\lambda_j^\uparrow(\rho), \lambda_j^\downarrow(\sigma)), 
				\end{aligned}
			\end{equation}
			when \( \alpha \in (1, \infty), z\in(0, \infty)\),
			\begin{equation}
				\begin{aligned}
					F^\text{C}_{\alpha,z}(\lambda_j^\uparrow(\rho), \lambda_j^\downarrow(\Phi(\sigma))) \leq F^\text{C}_{\alpha,z}(\lambda_j^\uparrow(\rho), \lambda_j^\downarrow(\sigma)),
				\end{aligned}
			\end{equation}
			
			The extremal values are taken over all unital quantum channels \(\Phi\).
		\end{theorem}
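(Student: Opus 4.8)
The plan is to reduce the statement to a classical majorization inequality for the fidelity of eigenvalue vectors, mirroring the strategy of Theorem~\ref{th3.1} but replacing the unitary orbit by the larger set of unital channels. The structural fact driving everything is that a unital trace-preserving channel contracts in the majorization order: writing $\Phi(\cdot)=\sum_k A_k(\cdot)A_k^*$ with $\sum_k A_k^*A_k=\sum_k A_k A_k^*=I$, one has $\lambda(\Phi(\sigma))\prec\lambda(\sigma)$. I would first establish (or cite) this majorization, and record that the natural candidate extremal unital channels are the identity, which returns $\sigma$ itself and produces the right-hand sides of the displayed inequalities, and the completely depolarizing channel $\sigma\mapsto I/d$, which produces the most mixed target.

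Next I would reduce the quantum quantity $F_{\alpha,z}(\rho,\Phi(\sigma))$ to a classical fidelity $F_\alpha^C$ of eigenvalues. Exactly as in the proof of Theorem~\ref{th3.1}, in the stated parameter windows the Golden--Thompson inequality (Eq.~\eqref{2-4}) together with the Araki--Lieb--Thirring inequality (Eq.~\eqref{2-5}) collapse $F_{\alpha,z}$ to the form $\mathrm{Tr}(\rho^\alpha\,\tau^{1-\alpha})$ with $\tau=\Phi(\sigma)$ at the optimizing configuration where $\rho$ and $\tau$ commute; Proposition~\ref{pp2.3} (Eq.~\eqref{2-8}) then pins down the pairing, decreasing--decreasing versus decreasing--increasing according to the sign of $1-\alpha$. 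This rewrites the extremal fidelity as $F^C_{\alpha,z}$ evaluated on the sorted eigenvalue vectors of $\rho$ and $\Phi(\sigma)$.

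With the fidelity expressed as a function of $\lambda(\Phi(\sigma))$, I would invoke the Schur monotonicity of Proposition~\ref{pp2.2}(3). Writing $\big(F_\alpha^C(\tilde p,\tilde q)\big)^\alpha=\sum_i \tilde p_i^\alpha\,\tilde q_i^{1-\alpha}$ and setting $g(t)=t^{1-\alpha}$, the map $\tilde q\mapsto\sum_i \tilde p_i^\alpha g(\tilde q_i)$ is Schur-convex when $\alpha\in(0,1)$ (so $g$ is increasing) and Schur-concave when $\alpha\in(1,\infty)$ (so $g$ is decreasing). Feeding in $\lambda(\Phi(\sigma))\prec\lambda(\sigma)$ then produces the two displayed inequalities, with the orientation flipping precisely as $\alpha$ crosses $1$, and the extremal values over all unital channels are attained at the identity and depolarizing channels identified in the first step.

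The main obstacle I anticipate is bookkeeping rather than a single hard estimate, and it has two parts. First, the quantum-to-classical reduction must be carried out separately in each $(\alpha,z)$ region, since the Araki--Lieb--Thirring step changes direction with $z$ and only yields the clean $F^C$ form in the windows inherited from Theorem~\ref{th3.1}. Second, and more delicate, is making Proposition~\ref{pp2.2}(3) genuinely applicable: that result requires the weight vector $\nu_i=\tilde p_i^\alpha$ to be sorted decreasingly, whereas the $\uparrow/\downarrow$ pairing in $F^C_{\alpha,z}(\lambda^\uparrow(\rho),\lambda^\downarrow(\sigma))$ forces a specific relative ordering of the eigenvalues of $\rho$ and $\sigma$. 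I would need to verify that the optimal pairing selected in the reduction step is exactly the one making the weights monotone in the direction demanded by Proposition~\ref{pp2.2}(3), so that its Schur-convexity/concavity conclusion yields the claimed inequality orientation and not its reverse.
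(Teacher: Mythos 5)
Your skeleton is the paper's: the majorization $\lambda(\Phi(\sigma))\prec\lambda(\sigma)$ for unital channels (which the paper attributes to Uhlmann's theorem) followed by a Schur-monotonicity argument on $\tilde q\mapsto\sum_i \tilde p_i^{\alpha}\tilde q_i^{1-\alpha}$. But your Schur step is stated with the wrong orientation, and the failure occurs exactly at the point you flagged as ``delicate'' and left unresolved. In the relevant sum $\sum_i (\lambda_i^{\uparrow}(\rho))^{\alpha}(\lambda_i^{\downarrow}(\cdot))^{1-\alpha}$ the weights $\nu_i=(\lambda_i^{\uparrow}(\rho))^{\alpha}$ are \emph{increasing}, so the hypothesis $\nu\in D$ of Proposition \ref{pp2.2}(3) fails, and relabeling $i\mapsto d-i+1$ does not help (it makes the weights decreasing but then the argument vector is increasingly sorted, again outside $D$). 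One must instead do what the paper does: compute the gradient $\partial_i\phi^{\alpha}(w)=(1-\alpha)\left(\lambda_i^{\uparrow}(\rho)/w_i\right)^{\alpha}$ on the ordered cone $D$ and check its monotonicity in $i$. Since $\lambda_i^{\uparrow}(\rho)/w_i$ is increasing in $i$ for $w\in D$, the gradient is increasing when $\alpha\in(0,1)$, so $\phi^{\alpha}$ is Schur-\emph{concave} on $D$, and decreasing when $\alpha\in(1,\infty)$, so $\phi^{\alpha}$ is Schur-\emph{convex} — exactly the opposite of your assignment. Feeding $\lambda(\Phi(\sigma))\prec\lambda(\sigma)$ into your claimed Schur-convexity for $\alpha\in(0,1)$ would give $F^{C}_{\alpha,z}(\lambda^{\uparrow}(\rho),\lambda^{\downarrow}(\Phi(\sigma)))\leq F^{C}_{\alpha,z}(\lambda^{\uparrow}(\rho),\lambda^{\downarrow}(\sigma))$, the reverse of Equation \eqref{4-1}, and this reversed inequality is simply false: take $\lambda(\rho)=(0.9,0.1)$, $\lambda(\sigma)=(1,0)$, $\alpha=1/2$, and $\Phi$ the completely depolarizing channel, so $\lambda(\Phi(\sigma))=(0.5,0.5)$; then the left-hand side equals $\left(\sqrt{0.05}+\sqrt{0.45}\right)^{2}\approx 0.8$ while the right-hand side equals $0.1$.

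A second, structural problem is the Golden--Thompson/Araki--Lieb--Thirring reduction you put at the center of the argument. The theorem as stated compares two \emph{classical} fidelities of eigenvalue vectors, and by Equation \eqref{2-2} the quantity $F^{C}_{\alpha,z}$ coincides with $F^{C}_{\alpha}$ and is independent of $z$; this is precisely why the paper can assert the result for all $z\in(0,\infty)$ without any quantum-to-classical reduction. If you route the proof through the ALT step of Theorem \ref{th3.1}, you only obtain the reduction in the restricted $(\alpha,z)$ windows — as you yourself observe — so even with the Schur step corrected, your argument could not cover the full range $z\in(0,\infty)$ claimed in the statement. The fix is to drop that reduction entirely: keep the majorization $\lambda(\Phi(\sigma))\prec\lambda(\sigma)$, establish the Schur-concavity/convexity via the gradient computation above, and conclude by raising to the power $1/\alpha>0$, which preserves both inequality directions.
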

		
		\begin{proof}
			It follows from Proposition \ref{pp2.1} that \( F_{\alpha,z} \) shows concavity for \( \alpha \in (0,1) \) and \( z \in (0,\infty) \), while it demonstrates  convexity for \( \alpha \in (1,\infty) \) and \( z \in (0,\infty) \). Furthermore, when the quantum states \( \rho \) and \( \sigma \) can be simultaneously diagonalized, the quantum \( \alpha \)-\( z \)-fidelity simplifies to the classical \( \alpha \)-fidelity. Thus, 
			\begin{equation}\label{4-2}
				\begin{aligned}
					F_{\alpha,z}^{C}(\rho,\sigma) = F_{\alpha}^{C}(\rho,\sigma)
				\end{aligned}
			\end{equation}
			for any quantum states \( \rho \) and \( \sigma \).
			
			Considering the above Proposition \ref{pp2.1} (6), let \(\phi^\alpha(w) = \sum_i p_i^\alpha w_i^{1-\alpha}\), \(w \in D\) and \(p_1 \leq \cdots \leq p_d\), where \(\phi\) is a real-valued function defined and continuous on D. Clearly, \(\phi^\alpha_{(i)}(w) = (1 - \alpha), (\frac{p_i}{w_i})^\alpha\), when \(\alpha \in (0, 1)\)
			\begin{equation*}
				0 \geq -\phi^\alpha_{(1)}(w) \geq -\phi^\alpha_{(2)}(w) \geq \cdots \geq -\phi^\alpha_{(d)}(w),
			\end{equation*}
			when \( \alpha \in (1, \infty)\),
			\begin{equation*}
				0 \geq \phi^\alpha_{(1)}(w) \geq \phi^\alpha_{(2)}(w) \geq \cdots \geq \phi^\alpha_{(d)}(w),
			\end{equation*}
			
			If \(\tilde{p} \prec^w \tilde{q}\), then
			\begin{equation}\label{4-3}
				\begin{aligned}
					& \phi^\alpha(\tilde{p}) \geq \phi^\alpha\tilde{q}, \quad \alpha \in (0, 1) 
					\\
					&\phi^\alpha\tilde{p}) \leq \phi^\alpha(\tilde{q}), \quad \alpha \in (1, \infty)
				\end{aligned}
			\end{equation}
			Define a function \(f\) as follows,
			\begin{equation}\label{4-4}
				\begin{aligned}
					f_\alpha(q_1, \ldots, q_d) := \sum_j (p_j^\uparrow)^\alpha (q_j^\downarrow)^{1-\alpha}, \quad \alpha \in (0, \infty)
				\end{aligned}
			\end{equation}
			where \(\{p_j\}_{j=1}^d\)represents a fixed probability distribution with elements summing to one.
			\par Using Equation \eqref{4-3}, it is evident that for \(\alpha \in (0, 1)\), the function \(f\) is concave, while for \(\alpha \in (1, \infty)\), the function \(f\) is convex. 
			\par 
			Additionally, for quantum states \(\rho\) and \(\sigma\), and a unital quantum channel \(\Phi\), Uhlmann’s Theorem implies that \(\Phi(\sigma) \prec \sigma\). In Equation  \eqref{4-4}, if we set \(p_j = \lambda_j(\rho)\) and \(q_j = \lambda_j(\sigma)\), then when \(\alpha \in (0, 1)\),
			\begin{equation*}
				\begin{aligned}
					&\sum_j (\lambda_j^\uparrow(\rho))^\alpha (\lambda_j^\downarrow(\sigma))^{1-\alpha} \\&\leq \sum_j (\lambda_j^\uparrow(\rho))^\alpha (\lambda_j^\downarrow(\Phi(\sigma)))^{1-\alpha}, 
				\end{aligned}
			\end{equation*}
			
			when \(\quad \alpha \in (1, \infty)\),
			\begin{equation*}
				\begin{aligned}
					&\sum_j (\lambda_j^\uparrow(\rho))^\alpha (\lambda_j^\downarrow(\sigma))^{1-\alpha} \\&\geq \sum_j (\lambda_j^\uparrow(\rho))^\alpha (\lambda_j^\downarrow(\Phi(\sigma)))^{1-\alpha}. 
				\end{aligned}
			\end{equation*}
			Hence, when \(\alpha \in (0, 1)\)
			\[
			F^\text{C}_{\alpha}(\lambda_j^\uparrow(\rho), \lambda_j^\downarrow(\Phi(\sigma))) \geq F^\text{C}_{\alpha}(\lambda_j^\uparrow(\rho), \lambda_j^\downarrow(\sigma)), 
			\]
			when \(\alpha \in (1, \infty)\)
			\[
			F^\text{C}_{\alpha}(\lambda_j^\uparrow(\rho), \lambda_j^\downarrow(\Phi(\sigma))) \leq F^\text{C}_{\alpha}(\lambda_j^\uparrow(\rho), \lambda_j^\downarrow(\sigma)),
			\]
			For \(z > 0 \text{, }
			F_{\alpha,z}^{C}(\rho, \sigma) = F_{\alpha}^{C}(\rho, \sigma)  
			\), Equation \eqref{4-1} is thereby proven.
		\end{proof}
		\begin{remark}
			Although this result is related to Theorem \ref{th3.1}, since there exist unital quantum channels that are not convex combinations of unitary transformations, Theorem \ref{th4.1} is not a direct consequence of Theorem  \ref{th3.1}. 
			\par Inspired by \cite{LJP}, where Jin Li et al. studied the minimum fidelity between a fixed state \(\rho\) and a state \(\sigma\) transmitted through quantum channels, we extend their work by considering a broader scope with more factors. In their work, they analyzed various types of quantum channels, provided a geometric interpretation related to density matrices and quantum channels, and studied fidelity as a measure of distance between subspaces, along with its connection to regular angles between these subspaces. In contrast, we focus on the \(F_{a,z}\) quantum fidelity between two quantum states \(\rho\) and \(\sigma\), where we investigate extremal problems under different quantum channel evolutions across a wider range of \((a,z)\). Additionally, we offer a new geometric interpretation of \(F_{a,z}\) within this extended range.

		\end{remark}
		
		\begin{corollary}\label{cl4.2}
			Suppose that \( \mathcal{S(H}_d)) \subseteq \mathcal{D(H}_d))\) is the set of all pure states. Let \(\rho \in \mathcal{D(H}_d))\). When \(\alpha,z \in \{(\alpha,z)\mid 0 < \alpha < 1, z \geq \max\{\alpha, 1 - \alpha\}\} \),
			\begin{equation}\label{4-5}
				\begin{aligned}
					\min_{\sigma \in \mathcal{S(H}_d)} F_{\alpha,z}(\rho, \sigma) = \lambda_{\min}(\rho),
				\end{aligned}
			\end{equation}
			when \( \alpha,z\in \{(\alpha,z) \mid 1 < \alpha \leq 2, \frac{\alpha}{2} \leq z \leq \alpha\}\bigcup \{(\alpha,z) \mid2 \leq \alpha < \infty, \alpha - 1 \leq z \leq \alpha\}\)
			\begin{equation}\label{4-6}
				\begin{aligned}
					\max_{\sigma \in \mathcal{S(H}_d)} F_{\alpha,z}(\rho, \sigma) = \lambda_{\max}(\rho).
				\end{aligned}
			\end{equation}
			where \(\lambda_{\min}(\rho)\) and \(\lambda_{\max}(\rho)\) denote the minimum and maximum eigenvalues of \(\rho\), respectively.
		\end{corollary}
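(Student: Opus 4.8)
The plan is to reduce the optimization over pure states to a Rayleigh--Ritz problem for a single positive operator. First I would exploit that every pure state $\sigma=|\psi\rangle\langle\psi|$ is a rank-one projection, so that $\sigma^{s}=\sigma$ for every $s>0$ (and, under the usual generalized-inverse convention $0^{s}:=0$ on the kernel, also for $s<0$). Substituting this into the first expression of \eqref{2-1} gives $\sigma^{\frac{1-\alpha}{2z}}\rho^{\frac{\alpha}{z}}\sigma^{\frac{1-\alpha}{2z}}=\langle\psi|\rho^{\frac{\alpha}{z}}|\psi\rangle\,|\psi\rangle\langle\psi|$, a rank-one positive operator whose only nonzero eigenvalue is $\langle\psi|\rho^{\frac{\alpha}{z}}|\psi\rangle$. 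Taking its $z$-th power and then the trace collapses the whole expression, yielding the clean identity
\[
F_{\alpha,z}(\rho,|\psi\rangle\langle\psi|)=\left(\langle\psi|\rho^{\frac{\alpha}{z}}|\psi\rangle\right)^{\frac{z}{\alpha}},
\]
valid for all $\alpha\in(0,\infty)$ and $z>0$.

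With this identity in hand, the extremal problems become variational problems for the positive operator $\rho^{\alpha/z}$, whose eigenvalues are $\lambda_i(\rho)^{\alpha/z}$. By the Rayleigh--Ritz principle, as $|\psi\rangle$ ranges over unit vectors the quantity $\langle\psi|\rho^{\alpha/z}|\psi\rangle$ ranges over $[\lambda_{\min}(\rho)^{\alpha/z},\lambda_{\max}(\rho)^{\alpha/z}]$, the endpoints being attained at the corresponding eigenvectors of $\rho$. Since the outer exponent $z/\alpha$ is positive, $x\mapsto x^{z/\alpha}$ is strictly increasing, so minimizing $F_{\alpha,z}$ amounts to choosing $|\psi\rangle$ as the eigenvector of the smallest eigenvalue, giving $\min_{\sigma}F_{\alpha,z}(\rho,\sigma)=(\lambda_{\min}(\rho)^{\alpha/z})^{z/\alpha}=\lambda_{\min}(\rho)$, which is \eqref{4-5}; symmetrically, maximizing yields $\lambda_{\max}(\rho)$, which is \eqref{4-6}.

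It remains to account for the parameter windows. The restriction $z\ge\max\{\alpha,1-\alpha\}$ with $\alpha\in(0,1)$ is exactly the concavity region of Proposition~\ref{pp2.1}, while the window used for the maximum is precisely its convexity (and data-processing) region. These hypotheses let me invoke the extreme-point structure of the convex set $\mathcal{D}(\mathcal{H}_d)$: a concave function attains its minimum at an extreme point and a convex function attains its maximum at an extreme point, and the extreme points of $\mathcal{D}(\mathcal{H}_d)$ are exactly the pure states $\mathcal{S}(\mathcal{H}_d)$. Thus the pure-state extrema computed above coincide with the global extrema over all density operators, which is what makes the corollary a genuine closeness statement between $\rho$ and the pure-state manifold rather than a bare evaluation.

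The step I expect to be most delicate is the pure-state simplification when $\alpha>1$, where $\frac{1-\alpha}{2z}<0$ forces a negative power of the singular operator $\sigma$. Here I would either pass to the supported (generalized-inverse) power, under which the rank-one collapse still produces the same identity, or argue by continuity using the full-rank regularization $\sigma+\xi I$ and letting $\xi\to 0^{+}$, exactly as in the opening paragraph of the proof of Theorem~\ref{th3.1}. Checking that the endpoint of the Rayleigh quotient is genuinely attained (so that the infimum and supremum are a minimum and a maximum) is immediate from compactness of the unit sphere, and the strict monotonicity of $x\mapsto x^{z/\alpha}$ guarantees that the outer exponent does not displace the optimizer.
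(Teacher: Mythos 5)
Your proposal is correct, and its core is genuinely different from the paper's argument. The paper never evaluates $F_{\alpha,z}$ at a pure state in closed form: it writes an arbitrary $\sigma$ as a convex mixture of pure states, uses the concavity/convexity windows of Proposition~\ref{pp2.1} to bound $F_{\alpha,z}(\rho,\sigma)$ by its values on the pure states of that mixture, and then picks eigenvectors of $\rho$ via the Courant--Fischer theorem to attain $\lambda_{\min}(\rho)$ and $\lambda_{\max}(\rho)$. In doing so it asserts $\langle \psi_0 | \rho | \psi_0 \rangle = F_{\alpha,z}(\rho, |\psi_0\rangle \langle \psi_0|)$ for a generic pure state of the mixture; by your rank-one collapse identity the fidelity actually equals $\langle\psi_0|\rho^{\alpha/z}|\psi_0\rangle^{z/\alpha}$, which coincides with $\langle\psi_0|\rho|\psi_0\rangle$ only at eigenvectors of $\rho$ or when $z=\alpha$. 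The paper's conclusion nevertheless stands because $\alpha/z\le 1$ throughout the concave window and $\alpha/z\ge 1$ throughout the convex window, so Jensen's inequality supplies the needed inequality direction; your exact identity plus Rayleigh--Ritz sidesteps this repair entirely. Your route also makes visible something the paper's does not: for the statement as literally written (extremization over $\mathcal{S}(\mathcal{H}_d)$ only), no concavity or convexity is needed and both conclusions hold for every $\alpha,z\in(0,\infty)$; the parameter windows matter only for your final step, upgrading the pure-state extrema to extrema over all of $\mathcal{D}(\mathcal{H}_d)$, which is the form actually invoked downstream in Corollary~\ref{cl4.3}. Conversely, the paper's route delivers that global statement directly, and your extreme-point paragraph reproduces it. Your handling of the negative power of the singular $\sigma$ when $\alpha>1$ (support/pseudo-inverse convention, or regularization by $\sigma+\xi I$) matches the convention the paper adopts implicitly.
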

		
		\begin{proof}
			It is well known that any mixed state \(\sigma\) can be represented as a convex combination of pure states:
			\[
			\sigma = \sum_i q_i |\psi_i\rangle \langle \psi_i|,
			\]
			where \( q_i \geq 0 \) and \(\sum_i q_i = 1\). 
			%The \(\alpha\)-z relative R\(\acute{e}\)nyi entropy \(D_{\alpha,z}\) exhibits monotonicity under completely positive trace-preserving maps if and only if at least one of the following conditions is satisfied.
			%\begin{itemize}
			%	\setlength{\itemindent}{-8mm}
			%\item \( 0 < \alpha < 1  \text{ and } z \geq \max\{\alpha, 1 - \alpha\}, \)\item  \(1 < \alpha \leq 2  \text{ and } \frac{\alpha}{2} \leq z \leq \alpha,\)\item \(2 \leq \alpha < \infty \text{ and } \alpha - 1 \leq z \leq \alpha.\)
			%\end{itemize}
			
			%Combining the relationship between the fidelity \( F_{\alpha, z} \) and the relative R\'enyi entropy \( D_{\alpha,z} \), one find that the fidelity \( F_{\alpha, z}\) is concave when \(\alpha, z \in \{(\alpha, z) \mid 0 < \alpha < 1, z \geq \max\{\alpha,1-\alpha\}\}\). When \( \{(\alpha, z) \mid 1 < \alpha \leq 2, \frac{\alpha}{2}\leq z \leq \alpha \}\bigcup \{(\alpha, z) \mid 2 \leq \alpha < \infty, \alpha-1 \leq z \leq \alpha\}\) is satisfied, the fidelity \( F_{\alpha, z} \) is convex. 
			\par From Proposition \ref{pp2.1}, we can deduce that when \(\alpha, z \in \{(\alpha, z) \mid 0 < \alpha < 1, z \geq \max\{\alpha,1-\alpha\}\}\),
			\begin{equation*}
				F_{\alpha,z}(\rho, \sigma) \geq \sum_i q_i F_{\alpha,z}(\rho, |\psi_i\rangle \langle \psi_i|),
			\end{equation*}
			When \( \{(\alpha, z) \mid 1 < \alpha \leq 2, \frac{\alpha}{2}\leq z \leq \alpha \}\bigcup \{(\alpha, z) \mid 2 \leq \alpha < \infty, \alpha-1 \leq z \leq \alpha,\}\)
			\begin{equation*}
				F_{\alpha,z}(\rho, \sigma) \leq \sum_i q_i F_{\alpha,z}(\rho, |\psi_i\rangle \langle \psi_i|),
			\end{equation*}
			\noindent Hence, for \(\alpha, z \in \{(\alpha, z) \mid 0 < \alpha < 1, z \geq \max\{\alpha,1-\alpha\}\}\), there exists at least one pure state \(|\psi_0\rangle\) such that,
			\[
			\langle \psi_0 | \rho | \psi_0 \rangle = F_{\alpha,z}(\rho, |\psi_0\rangle \langle \psi_0|) \leq F_{\alpha,z}(\rho, \sigma).
			\]
			Similarly, \( \{(\alpha, z) \mid 1 < \alpha \leq 2, \frac{\alpha}{2}\leq z \leq \alpha \}\bigcup \{(\alpha, z) \mid 2 \leq \alpha < \infty, \alpha-1 \leq z \leq \alpha,\}\), there exists at least one pure state \(|\psi_1\rangle\) such that,
			\[
			\langle \psi_1 | \rho | \psi_1 \rangle = F_{\alpha,z}(\rho, |\psi_1\rangle \langle \psi_1|) \geq F_{\alpha,z}(\rho, \sigma).
			\]
			By the Courant-Fischer Theorem(\cite{IYM}), we can choose an eigenvector \(|\tilde{\psi}_0\rangle\) such that \(\rho\) attains its minimum eigenvalue and an eigenvector \(|\tilde{\psi}_1\rangle\) such that \(\rho\) attains its maximum eigenvalue. Thus,
			\[
			|\tilde{\psi}_0\rangle \langle \tilde{\psi}_0| \in\mathcal{S(H}_d)), \quad |\tilde{\psi}_1\rangle \langle \tilde{\psi}_1| \in \mathcal{S(H}_d)),
			\]
			and
			\[
			F_{\alpha,z}(\rho, \sigma) \geq \langle \tilde{\psi}_0 | \rho | \tilde{\psi}_0 \rangle = \lambda_{\min}(\rho),
			\]
			\[
			F_{\alpha,z}(\rho, \sigma) \leq \langle \tilde{\psi}_1 | \rho | \tilde{\psi}_1 \rangle = \lambda_{\max}(\rho),
			\]
			Therefore, Equations \eqref{4-5} and \eqref{4-6} can be established.
		\end{proof}
		
		\begin{remark}
			The values of \( \lambda_{\min}(\rho) \) and \( \lambda_{\max}(\rho) \) indicate how close or far \( \rho \) is from the maximally mixed state. In other words, they reflect how close or far \( \rho \) is from the endpoint states (Singular density operators).
		\end{remark}
		
		\begin{corollary}\label{cl4.3}
			Let \(\rho\) and \(\sigma\) be two quantum states, and \(\Phi\) be a quantum channel. That is, when \(\alpha, z \in \{(\alpha, z) \mid 0 < \alpha < 1, z \geq \max\{\alpha,1-\alpha\}\}\),
			\begin{equation}\label{4-7}
				\begin{aligned}
					\min_{\Phi} F_{\alpha,z}(\rho, \Phi(\sigma)) = \lambda_{\min}(\rho).
				\end{aligned}
			\end{equation}
			When \( \{(\alpha, z) \mid 1 < \alpha \leq 2, \frac{\alpha}{2}\leq z \leq \alpha \}\bigcup \{(\alpha, z) \mid 2 \leq \alpha < \infty, \alpha-1 \leq z \leq \alpha\}\),
			\begin{equation}\label{4-8}
				\begin{aligned}
					\max_{\Phi} F_{\alpha,z}(\rho, \Phi(\sigma)) = \lambda_{\max}(\rho),
				\end{aligned}
			\end{equation}
			where the extrema are taken over all quantum channels \(\Phi\).
		\end{corollary}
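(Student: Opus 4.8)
The plan is to reduce the optimization over all quantum channels to an optimization over the entire set \(\mathcal{D}(\mathcal{H}_d)\) of density operators, and then to invoke the concavity/convexity recorded in Proposition \ref{pp2.1} together with Corollary \ref{cl4.2}. The crucial observation is that, unlike the unital case treated in Theorem \ref{th4.1}, arbitrary channels can steer \(\sigma\) to \emph{any} target state. Indeed, for a fixed \(\tau \in \mathcal{D}(\mathcal{H}_d)\) the replacer (preparation) map \(\Phi_\tau(X) := \mathrm{Tr}(X)\,\tau\) is completely positive and trace preserving, and since \(\mathrm{Tr}(\sigma)=1\) one has \(\Phi_\tau(\sigma)=\tau\). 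Hence \(\{\Phi(\sigma) : \Phi \text{ a quantum channel}\} = \mathcal{D}(\mathcal{H}_d)\), so that \(\min_{\Phi} F_{\alpha,z}(\rho,\Phi(\sigma)) = \min_{\tau \in \mathcal{D}(\mathcal{H}_d)} F_{\alpha,z}(\rho,\tau)\), and likewise for the maximum.

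Next I would exploit the variational structure of \(\tau \mapsto F_{\alpha,z}(\rho,\tau)\). For \(\alpha \in (0,1)\) with \(z \geq \max\{\alpha,1-\alpha\}\) this map is concave by Proposition \ref{pp2.1}, so its minimum over the convex body \(\mathcal{D}(\mathcal{H}_d)\) is attained at an extreme point, and the extreme points of \(\mathcal{D}(\mathcal{H}_d)\) are precisely the pure states. This is exactly the estimate already produced in the proof of Corollary \ref{cl4.2}: decomposing any \(\tau = \sum_i q_i |\psi_i\rangle\langle\psi_i|\) and using concavity gives \(F_{\alpha,z}(\rho,\tau) \geq \sum_i q_i F_{\alpha,z}(\rho,|\psi_i\rangle\langle\psi_i|) \geq \lambda_{\min}(\rho)\), whence \(\min_{\tau \in \mathcal{D}(\mathcal{H}_d)} F_{\alpha,z}(\rho,\tau) = \lambda_{\min}(\rho)\). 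Symmetrically, in the convex regime \(\{1 < \alpha \leq 2,\ \tfrac{\alpha}{2}\leq z \leq \alpha\} \cup \{2 \leq \alpha < \infty,\ \alpha-1 \leq z \leq \alpha\}\) the map is convex, so its maximum over \(\mathcal{D}(\mathcal{H}_d)\) is attained at a pure state and equals \(\lambda_{\max}(\rho)\) by the same corollary.

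Finally I would close the loop by exhibiting channels that saturate these bounds. Let \(|\tilde\psi_0\rangle\) (resp. \(|\tilde\psi_1\rangle\)) be a normalized eigenvector of \(\rho\) for \(\lambda_{\min}(\rho)\) (resp. \(\lambda_{\max}(\rho)\)), as furnished by the Courant--Fischer theorem. Taking the replacer channel of the first step with target \(|\tilde\psi_0\rangle\langle\tilde\psi_0|\) yields \(F_{\alpha,z}(\rho,\Phi(\sigma)) = \langle\tilde\psi_0|\rho|\tilde\psi_0\rangle = \lambda_{\min}(\rho)\), establishing \eqref{4-7}, while the target \(|\tilde\psi_1\rangle\langle\tilde\psi_1|\) gives \eqref{4-8}. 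The only point requiring genuine care, and the main conceptual step, is the first one: recognising that optimizing over all channels is the same as optimizing over all output states. This is precisely what separates Corollary \ref{cl4.3} from Theorem \ref{th4.1}, where the unital constraint forces \(\Phi(\sigma) \prec \sigma\) and hence a strictly smaller reachable set; once this identification is in place, the rest is a direct appeal to the concavity/convexity of \(F_{\alpha,z}\) and to Corollary \ref{cl4.2}.
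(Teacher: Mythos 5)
Your proposal is correct and follows essentially the same route as the paper: both reduce the channel optimization to state optimization via the replacer (preparation) channel \(\Phi_\tau(X)=\mathrm{Tr}(X)\,\tau\) and then invoke Corollary~\ref{cl4.2} together with the concavity/convexity of \(F_{\alpha,z}\). If anything, your write-up is slightly more complete — the paper only records that pure states are reachable (\(\mathcal{S}(\mathcal{H}_d)\subset\{\Phi(\sigma)\}\)) and leaves the matching bound over all reachable states implicit in Corollary~\ref{cl4.2}, whereas you make the identification \(\{\Phi(\sigma):\Phi\}=\mathcal{D}(\mathcal{H}_d)\) and the saturating channels explicit.
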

		
		\begin{proof} 
			For a quantum channel \(\Phi\), suppose
			\[
			\mathcal{S(\tilde{H}}_d))= \{\Phi(\sigma)\}.
			\]
			If \(
			\Phi(\cdot) = \text{Tr}(\cdot)|\psi\rangle\langle\psi|
			\), the map \(\Phi\) is a completely positive and trace-preserving linear map, so \(\Phi\) is a quantum channel. Additionally,
			\[
			|\psi\rangle\langle\psi| \in \mathcal{S(H}_d)),
			\]
			we have
			\[
			\mathcal{S(H}_d)), \subset \mathcal{S(\tilde{H}}_d)),
			\]
			%for any quantum channel \(\Phi\), when \(\alpha, z \in \{(\alpha, z) \mid 0 < \alpha < 1, z \geq \max\{\alpha,1-\alpha\}\}\), \[F_{\alpha,z}(\rho,\Phi(\sigma))\geq F_{\alpha,z}(\rho,\sigma)\geq \min\limits_{\sigma \in S(H_d)} F_{\alpha,z}(\rho,\sigma) = \lambda_{\min}(\rho).\]
			According to Corollary \ref{cl4.2}, it is clear that the values of Equation \eqref{4-7} and \eqref{4-8} 
			can be determined.
		\end{proof}
		
		\begin{corollary}\label{cl4.4}
			If \(\Phi\) is a quantum channel and \(\Phi(\cdot) = \text{Tr}(\cdot) \rho\), then
			\[
			F_{\alpha,z}(\rho, \Phi(\sigma)) = F_{\alpha,z}(\rho, \sigma) = 1.
			\]
			Therefore, the extrema of \(F_{\alpha,z}(\rho, \Phi(\sigma))\) occur in two cases: when \(\alpha, z \in \{(\alpha, z) \mid 0 < \alpha < 1, z \geq \max\{\alpha,1-\alpha\}\}\), the maximum value of \( F_{\alpha,z}(\rho, \Phi(\sigma)) \) is 1, and when \( \{(\alpha, z) \mid 1 < \alpha \leq 2, \frac{\alpha}{2}\leq z \leq \alpha \}\bigcup \{(\alpha, z) \mid 2 \leq \alpha < \infty, \alpha-1 \leq z \leq \alpha\}\), the minimum value of \( F_{\alpha,z}(\rho, \Phi(\sigma)) \) is 1.
			
		\end{corollary}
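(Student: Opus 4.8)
The plan is to separate the claim into a one-line reduction, a direct evaluation, and a global-bound argument. First I would observe that every $\sigma\in\mathcal{D}(\mathcal{H}_d)$ satisfies $\operatorname{Tr}(\sigma)=1$, so the channel $\Phi(\cdot)=\operatorname{Tr}(\cdot)\,\rho$ acts as $\Phi(\sigma)=\operatorname{Tr}(\sigma)\,\rho=\rho$ on every input state. This collapses the quantity of interest to the self-fidelity, $F_{\alpha,z}(\rho,\Phi(\sigma))=F_{\alpha,z}(\rho,\rho)$, independently of $\sigma$.

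Next I would evaluate $F_{\alpha,z}(\rho,\rho)$ directly from definition \eqref{2-1}. Since all three factors are powers of the single operator $\rho$, they commute and their exponents add: $\frac{1-\alpha}{2z}+\frac{\alpha}{z}+\frac{1-\alpha}{2z}=\frac{1}{z}$. The bracketed operator is therefore $\bigl(\rho^{1/z}\bigr)^{z}=\rho$, so taking the trace gives $\operatorname{Tr}(\rho)=1$ and hence $F_{\alpha,z}(\rho,\rho)=1^{1/\alpha}=1$. This establishes the displayed equality.

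To recognize the value $1$ as an extremum over all channels, I would appeal to the fact that $1$ is the global optimum of $F_{\alpha,z}(\rho,\cdot)$ on $\mathcal{D}(\mathcal{H}_d)$. Using the identity relating the fidelity to the $\alpha$-$z$-R\'enyi relative entropy recalled just before Corollary \ref{cl3.3}, namely $S_{\alpha,z}(\rho\|\tau)=\frac{\alpha}{\alpha-1}\log F_{\alpha,z}(\rho,\tau)$, together with the nonnegativity $S_{\alpha,z}(\rho\|\tau)\ge 0$ (with equality iff $\tau=\rho$, itself obtainable from the data processing inequality by applying the trace channel), the sign of the prefactor $\frac{\alpha}{\alpha-1}$ controls the direction of the bound: for $\alpha\in(0,1)$ it is negative, forcing $F_{\alpha,z}(\rho,\tau)\le 1$, while for $\alpha>1$ it is positive, forcing $F_{\alpha,z}(\rho,\tau)\ge 1$. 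Since $\Phi'(\sigma)$ is again a density operator for every channel $\Phi'$, these bounds hold uniformly across all channels.

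Combining the two ingredients finishes the proof: the replacer channel $\Phi(\cdot)=\operatorname{Tr}(\cdot)\,\rho$ attains the value $1$, and by the uniform bound no channel can exceed it in the concave region $0<\alpha<1,\ z\ge\max\{\alpha,1-\alpha\}$ nor fall below it in the stated convex region. Hence $1$ is the maximum of $F_{\alpha,z}(\rho,\Phi(\sigma))$ in the concave case and its minimum in the convex case. The only delicate point is this last step, certifying that $1$ is genuinely extremal rather than merely attained, which hinges entirely on the sign flip of $\frac{\alpha}{\alpha-1}$ between the two parameter regions; I would state this dependence explicitly so that the switch from ``maximum'' to ``minimum'' and the equality case $\tau=\rho$ are transparent.
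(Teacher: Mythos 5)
Your proposal is correct in substance, but there is no line-by-line comparison to make: the paper states Corollary \ref{cl4.4} with no proof at all, so your write-up supplies the argument the paper omits (and silently fixes the typo in the statement, since the chain should read $F_{\alpha,z}(\rho,\Phi(\sigma))=F_{\alpha,z}(\rho,\rho)=1$, not $F_{\alpha,z}(\rho,\sigma)=1$). Your first two steps are exactly what the corollary presupposes: the replacer channel satisfies $\Phi(\sigma)=\operatorname{Tr}(\sigma)\,\rho=\rho$ for every input state, and in definition \eqref{2-1} the exponents sum to $\tfrac{1-\alpha}{2z}+\tfrac{\alpha}{z}+\tfrac{1-\alpha}{2z}=\tfrac1z$, so $F_{\alpha,z}(\rho,\rho)=(\operatorname{Tr}\rho)^{1/\alpha}=1$.

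The step to tighten is extremality. In the convex region your argument is fully covered by the paper's own toolkit: Proposition \ref{pp2.1}(3) asserts data processing exactly for $\{1<\alpha\le2,\ \alpha/2\le z\le\alpha\}\cup\{2\le\alpha<\infty,\ \alpha-1\le z\le\alpha\}$, and applying it with the trace channel gives $1=F_{\alpha,z}(1,1)\le F_{\alpha,z}(\rho,\tau)$ for every state $\tau$, so $1$ is the minimum. In the concave region $\{0<\alpha<1,\ z\ge\max\{\alpha,1-\alpha\}\}$, however, Proposition \ref{pp2.1}(3) does \emph{not} state data processing, so ``obtainable from the data processing inequality'' is not justified by the paper's preliminaries; you must either invoke the monotonicity of the $\alpha$-$z$ relative entropy on that region (true, and contained in the references \cite{ZH,HZ}) or replace it by an elementary bound that needs no DPI: by H\"older's inequality for Schatten (quasi-)norms,
\begin{align*}
\operatorname{Tr}\bigl[\bigl(\rho^{\frac{\alpha}{2z}}\sigma^{\frac{1-\alpha}{z}}\rho^{\frac{\alpha}{2z}}\bigr)^{z}\bigr]
&=\bigl\|\rho^{\frac{\alpha}{2z}}\sigma^{\frac{1-\alpha}{2z}}\bigr\|_{2z}^{2z}\\
&\le\bigl\|\rho^{\frac{\alpha}{2z}}\bigr\|_{2z/\alpha}^{2z}\,\bigl\|\sigma^{\frac{1-\alpha}{2z}}\bigr\|_{2z/(1-\alpha)}^{2z}=1,
\end{align*}
which yields $F_{\alpha,z}(\rho,\tau)\le1$ for all $\alpha\in(0,1)$ and $z>0$. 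Two smaller remarks: the parenthetical ``equality iff $\tau=\rho$'' does not follow from DPI plus the trace channel and is not needed, since attainment of the value $1$ is witnessed directly by the replacer channel; and for $\alpha>1$ the uniform bound $F_{\alpha,z}(\rho,\tau)\ge1$ requires the convention $F_{\alpha,z}(\rho,\tau)=\infty$ when $\operatorname{supp}(\rho)\not\subseteq\operatorname{supp}(\tau)$ (the convention used in the paper's definition of $S_{\alpha,z}$) --- under the pseudo-inverse convention a rank-deficient $\tau$ can give $F_{\alpha,z}(\rho,\tau)<1$, e.g.\ $F_{\alpha,z}(\rho,|\psi\rangle\langle\psi|)=\langle\psi|\rho^{\alpha/z}|\psi\rangle^{z/\alpha}$. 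That caveat is a defect of the corollary as printed rather than of your argument, but it is worth stating explicitly.
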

		\par  In summary, \(\Phi\) is any quantum channel, finding the extrema of  \(F_{\alpha,z}(\rho, \Phi(\sigma))\) is relatively straightforward. However, once we restrict ourselves to unital or mixed unitary channels, the complexity of the problem increases, as the freedom to choose becomes limited in these special cases. It is also important to note that we lack theoretical results to guide us, making the search for the maximum value a challenging task.
		
		For a unitary-preserving quantum channel \(\Phi\), if
		\[
		\Psi = p\Phi + (1 - p)\Omega,
		\]
		\(\Psi\) is a mixed unitary channel. In fact, the set of all mixed unitary channels is a subset of the set of all unitary-preserving channels, with \(\Omega\) being the central point of both sets.

		\begin{theorem}\label{th4.5}
			Let \(\rho\) and \(\sigma\) be two quantum states, and \(\Psi\) be a mixed unitary channel. when \(\alpha, z \in \{(\alpha, z) \mid 0 < \alpha < 1, z \geq \max\{\alpha,1-\alpha\}\}\), we have, 
			\begin{equation}\label{4-9}
				\begin{aligned}
					\min_{\Psi} F_{\alpha,z}(\rho, \Psi(\sigma)) = F^C_{\alpha,z}(\lambda(\rho)^\downarrow, \lambda(\sigma)^\uparrow),
				\end{aligned}
			\end{equation}
			when \( \{(\alpha, z) \mid 1 < \alpha \leq 2, \frac{\alpha}{2}\leq z \leq \alpha \}\bigcup \{(\alpha, z) \mid 2 \leq \alpha < \infty, \alpha-1 \leq z \leq \alpha\}\),
			\begin{equation}\label{4-10}
				\begin{aligned}
					\max_{\Psi} F_{\alpha,z}(\rho, \Psi(\sigma)) = F^C_{\alpha,z}(\lambda(\rho)\downarrow, \lambda(\sigma)\uparrow),
				\end{aligned}
			\end{equation}
			where the extremas are taken over all mixed unitary channels \(\Psi\).
		\end{theorem}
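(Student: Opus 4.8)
\emph{Proof proposal.} The plan is to reduce the extremization over all mixed unitary channels to the extremization over single unitary conjugations, which is precisely what Theorem \ref{th3.1} resolves. The crucial structural fact is that any mixed unitary channel acts as $\Psi(\sigma) = \sum_i p_i U_i \sigma U_i^*$, i.e.\ $\Psi(\sigma)$ is a convex combination of points on the unitary orbit $U_\sigma$. Since the second argument of $F_{\alpha,z}(\rho,\cdot)$ enters only through this convex combination, the concavity/convexity of $F_{\alpha,z}$ recorded in Proposition \ref{pp2.1}(2) lets me bound $F_{\alpha,z}(\rho,\Psi(\sigma))$ from below or above by the corresponding extremum over single unitaries, while the observation that every single conjugation $U(\cdot)U^*$ is itself a (degenerate) mixed unitary channel supplies the matching reverse inequality.

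First I would treat the concave case, $0<\alpha<1$ and $z\geq \max\{\alpha,1-\alpha\}$. Here $F_{\alpha,z}(\rho,\cdot)$ is concave, so Jensen's inequality gives, for every mixed unitary $\Psi$,
\begin{equation*}
F_{\alpha,z}(\rho,\Psi(\sigma)) \geq \sum_i p_i\, F_{\alpha,z}(\rho, U_i\sigma U_i^*) \geq \min_{U} F_{\alpha,z}(\rho, U\sigma U^*),
\end{equation*}
hence $\min_\Psi F_{\alpha,z}(\rho,\Psi(\sigma)) \geq \min_U F_{\alpha,z}(\rho,U\sigma U^*)$. Taking $\Psi(\cdot)=U(\cdot)U^*$ a single conjugation yields the reverse inequality $\min_\Psi \leq \min_U$, so the two minima coincide, and Theorem \ref{th3.1}, Equation \eqref{3-10}, evaluates it as $F_\alpha^C(\lambda^\downarrow(\rho),\lambda^\uparrow(\sigma)) = F_{\alpha,z}^C(\lambda^\downarrow(\rho),\lambda^\uparrow(\sigma))$, which is \eqref{4-9}. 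The convex case, covering the two parameter strips with $\alpha>1$, is symmetric: convexity of $F_{\alpha,z}(\rho,\cdot)$ turns Jensen's inequality around to give $F_{\alpha,z}(\rho,\Psi(\sigma)) \leq \sum_i p_i F_{\alpha,z}(\rho,U_i\sigma U_i^*) \leq \max_U F_{\alpha,z}(\rho,U\sigma U^*)$, and a single conjugation again gives the reverse bound, so $\max_\Psi = \max_U$, which Theorem \ref{th3.1}, Equation \eqref{3-9}, identifies as $F_\alpha^C(\lambda^\downarrow(\rho),\lambda^\uparrow(\sigma))$, proving \eqref{4-10}.

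The main obstacle is matching parameter ranges between Proposition \ref{pp2.1}(2) and Theorem \ref{th3.1}. The convex (maximum) case is clean, because \eqref{3-9} holds for all $z\in(0,\infty)$, exactly where convexity is invoked. The concave (minimum) case is more delicate: the minimum formula \eqref{3-10} is stated only for $z\in(0,1)$, whereas the concavity region permits $z\geq 1$. For $z\in[\max\{\alpha,1-\alpha\},1)$ the argument above is complete; for $z\geq 1$ I would instead exhibit the explicit minimizer $U_0$ with $U_0 W_0|i\rangle=|d-i+1\rangle$ (the ``anti-sorting'' unitary from the proof of Theorem \ref{th3.1}) as the optimal mixed unitary channel, and then close the gap with the Jensen lower bound, which needs only concavity and therefore holds throughout $z\geq\max\{\alpha,1-\alpha\}$. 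Verifying that this candidate genuinely attains $F_{\alpha,z}^C(\lambda^\downarrow(\rho),\lambda^\uparrow(\sigma))$ for $z\geq 1$, rather than merely bounding it, is the one computation I expect to require real care.
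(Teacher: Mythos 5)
Your argument is essentially the paper's own proof, made slightly more explicit: the paper likewise invokes the concavity/convexity of Proposition \ref{pp2.1} to conclude that the relevant extremum over the convex set of mixed unitary channels is attained at an extreme point, i.e.\ at a single unitary conjugation, and then evaluates that extremum by Theorem \ref{th3.1}; your Jensen inequality together with the remark that $U(\cdot)U^*$ is itself a (degenerate) mixed unitary channel is the same reduction written out.

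The range mismatch you flag in the minimum case is, however, a genuine hole, and it is one the paper's proof steps over silently: Equation \eqref{3-10} is proved only for $z\in(0,1)$, while the concavity region in Theorem \ref{th4.5} contains every $z\geq\max\{\alpha,1-\alpha\}$, in particular all $z\geq 1$, and the paper cites Theorem \ref{th3.1} there without comment. Your repair plan is the right one, and the step you expect to ``require real care'' splits into two pieces, only one of which is hard. Attainment is free: for the anti-sorting unitary $U_0$ (with $U_0W_0|i\rangle=|d-i+1\rangle$) the states $\rho$ and $U_0\sigma U_0^*$ commute, so by Equation \eqref{2-2} the fidelity equals $F_\alpha^C(\lambda^\downarrow(\rho),\lambda^\uparrow(\sigma))$ for \emph{every} $z>0$, since the commuting value is $z$-independent. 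The real work is the matching lower bound for $z\geq 1$, and it cannot come from Araki--Lieb--Thirring: with $r=1/z<1$ that inequality bounds $\operatorname{Tr}\bigl[\bigl(\rho^{\alpha/2z}U\sigma^{(1-\alpha)/z}U^*\rho^{\alpha/2z}\bigr)^z\bigr]$ from \emph{above} by $\operatorname{Tr}[\rho^\alpha U\sigma^{1-\alpha}U^*]$, the wrong direction. What does work is the Gelfand--Naimark log-majorization $\lambda^\downarrow(A)\,\lambda^\uparrow(B)\prec_{\log}\lambda(AB)$ for positive definite $A,B$: taking $A=\rho^{\alpha/z}$ and $B=U\sigma^{(1-\alpha)/z}U^*$, and using that log-majorization implies weak majorization of entrywise $z$-th powers for any $z>0$, one obtains $\operatorname{Tr}\bigl[\bigl(\rho^{\alpha/2z}U\sigma^{(1-\alpha)/z}U^*\rho^{\alpha/2z}\bigr)^z\bigr]\geq\sum_i(\lambda_i^\downarrow(\rho))^\alpha(\lambda_i^\uparrow(\sigma))^{1-\alpha}$ for all $z\in(0,\infty)$ and $\alpha\in(0,1)$. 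This extends \eqref{3-10} to all $z>0$, completes your $z\geq 1$ branch, and incidentally repairs the paper's proof of \eqref{4-9} as well.
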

		\begin{proof} By Proposition \ref{pp2.1}, we know that when \(\alpha, z \in \{(\alpha, z) \mid 0 < \alpha < 1, z \geq \max\{\alpha,1-\alpha\}\}\), the quantum \(\alpha\)-$z$-fidelity is concave, and when \( \{(\alpha, z) \mid 1 < \alpha \leq 2, \frac{\alpha}{2}\leq z \leq \alpha \}\bigcup \{(\alpha, z) \mid 2 \leq \alpha < \infty, \alpha-1 \leq z \leq \alpha\}\), the quantum \(\alpha\)-$z$-fidelity is convex. Therefore, the minimum of \(F(\rho, \Psi(\sigma))\) and the maximum of \(F(\rho, \Psi(\sigma))\) are attained at the endpoints of the set of quantum channels \(\Psi\). Thus, when considering the set of mixed unitary channels, the extrema must occur at a unitary channel (of the form: \(\Psi(\cdot) = U(\cdot)U^*\)). 
			
			Using Theorem \ref{th3.1}, the conclusions from Equations \eqref{4-9} and \eqref{4-10} lead to this proof, where the extremas are taken over all mixed unitary channels \(\Psi\).
		\end{proof}
		
		\begin{remark}
			It is observed that when $\alpha$ and $z$ belong to other intervals, the extremums of \(\alpha\)-z-fidelity may occur at any point, making it difficult to determine the remaining extremas. As in Corollary \ref{cl3.3}, it is easy to provide the expressions for the $\alpha$-$z$-Rényi relative entropy $S_{\alpha,z}(\rho\|\sigma)$ under the evolution of these three quantum channels, respectively.
			
		\end{remark}
		\section{Geometric properties of quantum \(\alpha\)-$z$-fidelitye}
		Geometric interpretation helps intuitively understand the properties of quantum $\alpha$-$z$-fidelity, revealing the distance and similarity between quantum states. It simplifies complex mathematical formulas, making the concept of fidelity easier to grasp, especially through providing a clear framework for quantum state optimization and transformation. This perspective also helps study the topological features of quantum state spaces, offering new theoretical tools for quantum computing and communication.
		Let \( S_m \) and \( S_n \) be \( m \)-dimensional and \( n \)-dimensional subspaces of a \( d \)-dimensional Hilbert space \( H_d \),  respectively. Let \( P_{S_m} \) and \( P_{S_n} \) be the orthogonal projections onto the subspaces \( S_m \) and \( S_n \), respectively. Thus, the density operators are given by
		\(
		\rho_{S_m} = \frac{1}{m} P_{S_m}
		\) and \(
		\rho_{S_n} = \frac{1}{n} P_{S_n}.
		\)
		This section primarily focuses on studying the quantum \(\alpha\)-$z$-fidelity \( F_{\alpha,z}(\rho_{S_m}, \rho_{S_n}) \) and the geometric properties of the two subspaces \( S_m \) and \( S_n \).

		\begin{theorem} \label{th4.6}
			Let \( S_m \) and \( S_n \) denote \( m \)-dimensional and \( n \)-dimensional subspaces of a \( d \)-dimensional Hilbert space \( \mathcal{H}_d \). Define
			\[
			\rho_{S_m} = \frac{1}{m} P_{S_m} \quad \text{and} \quad \rho_{S_n} = \frac{1}{n} P_{S_n}.
			\]
			Then,
			\begin{equation}
				\begin{aligned}
					\frac{\max \left\{m+n-d,0\right\}}{m^z n^{1-\alpha}}
					& \leq F^{z}_{\alpha,z}(\rho_{S_m}, \rho_{S_n}) 
					\\ &\leq \min \left\{  \frac{m^{1-z}}{n^{1-\alpha}},  \frac{n^\alpha}{m^z}  \right\}
				\end{aligned}
			\end{equation}
		\end{theorem}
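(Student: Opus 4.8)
The plan is to collapse the $\alpha$-$z$-fidelity of the two normalized projections onto a single scalar spectral quantity and then bound that quantity by elementary dimension counting. Because $P_{S_m}$ and $P_{S_n}$ are orthogonal projections, every positive power acts as $P^{s}=P$ on its support, so $\rho_{S_m}^{s}=m^{-s}P_{S_m}$ and $\rho_{S_n}^{s}=n^{-s}P_{S_n}$. Substituting the exponents of the definition \eqref{2-1} and factoring the scalars out of the $z$-th power, a direct computation reduces the quantity in question to
\begin{equation*}
	F^{z}_{\alpha,z}(\rho_{S_m},\rho_{S_n})=\frac{1}{m^{z}\,n^{1-\alpha}}\,\operatorname{Tr}\!\left[(P_{S_n}P_{S_m}P_{S_n})^{z}\right].
\end{equation*}
Writing $T:=\operatorname{Tr}[(P_{S_n}P_{S_m}P_{S_n})^{z}]$ and $M:=P_{S_n}P_{S_m}P_{S_n}=(P_{S_m}P_{S_n})^{*}(P_{S_m}P_{S_n})$, the operator $M$ is positive semidefinite with spectrum contained in $[0,1]$ (its nonzero eigenvalues are the squared cosines of the principal angles between $S_m$ and $S_n$), so that $T=\sum_i c_i^{z}$ over the eigenvalues $c_i\in(0,1]$ of $M$. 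The entire theorem then reduces to the two-sided estimate $\max\{m+n-d,0\}\le T\le\min\{m,n\}$, followed by division by the fixed prefactor $m^{z}n^{1-\alpha}$.

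For the upper bound I would use only two elementary facts about $M$. First, every eigenvalue obeys $c_i\le 1$, hence $c_i^{z}\le 1$ for $z>0$; second, the number of nonzero eigenvalues equals $\operatorname{rank}(M)\le\min\{\operatorname{rank}P_{S_m},\operatorname{rank}P_{S_n}\}=\min\{m,n\}$. Together these give $T\le\min\{m,n\}$. Using $T\le m$ after dividing by the prefactor produces the bound $m^{1-z}/n^{1-\alpha}$, whereas $T\le n$ produces $n^{\alpha}/m^{z}$; taking the smaller of the two yields the stated minimum on the right.

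For the lower bound the decisive observation is that $S_m\cap S_n$ lies in the eigenvalue-$1$ eigenspace of $M$: any $v\in S_m\cap S_n$ satisfies $P_{S_n}v=v$ and $P_{S_m}v=v$, hence $Mv=v$. Therefore the eigenvalue $1$ occurs with multiplicity at least $\dim(S_m\cap S_n)$, each such eigenvalue contributes exactly $1$ to $T=\sum_i c_i^{z}$, and all remaining terms are nonnegative, so $T\ge\dim(S_m\cap S_n)$. The intersection identity $\dim(S_m\cap S_n)=m+n-\dim(S_m+S_n)\ge m+n-d$, together with the trivial bound $\dim(S_m\cap S_n)\ge 0$, gives $T\ge\max\{m+n-d,0\}$, and dividing by $m^{z}n^{1-\alpha}$ delivers the left-hand inequality.

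The scalar bookkeeping of the reduction and the rank inequality are routine; the step that requires the most care is the lower bound, where one must both identify $S_m\cap S_n$ as a subspace of the $1$-eigenspace of $M$ and invoke the intersection-dimension formula $\dim(S_m\cap S_n)\ge\max\{m+n-d,0\}$, which is exactly what forces the numerator $\max\{m+n-d,0\}$. A secondary technical point, already settled by the support conventions fixed after \eqref{2-2}, is that the fractional powers of the rank-deficient operators $\rho_{S_m},\rho_{S_n}$ and of $M$ are read on their respective supports, which is what makes the clean identity $P^{s}=P$ legitimate throughout.
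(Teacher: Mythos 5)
Your proof is correct, but it takes a genuinely different route from the paper's. The paper first evaluates the fidelity in the \emph{commuting} case, where $P_{S_m}P_{S_n}P_{S_m}=P_{S_m\cap S_n}$ yields the exact value $\dim(S_m\cap S_n)/(m^{z}n^{1-\alpha})$, and then appeals to Theorem \ref{th3.1} to claim that, as $S_m$ and $S_n$ range over all subspaces of their dimensions (equivalently, over unitary orbits), the extrema are attained at commuting configurations; the dimension bounds $\max\{m+n-d,0\}\le\dim(S_m\cap S_n)\le\min\{m,n\}$ then close the argument. You instead bound $T=\operatorname{Tr}\bigl[(P_{S_n}P_{S_m}P_{S_n})^{z}\bigr]$ directly for an \emph{arbitrary} pair of subspaces: the spectrum of $M=P_{S_n}P_{S_m}P_{S_n}$ lies in $[0,1]$ and $\operatorname{rank} M\le\min\{m,n\}$ for the upper bound, while $S_m\cap S_n$ lies inside the $1$-eigenspace of $M$ for the lower bound. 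This buys you two things the paper's argument does not deliver cleanly: (i) you never invoke Theorem \ref{th3.1}, whose minimum formulas are only established for $z\in(0,1)$ or on the restricted convexity region, so your estimate holds for all $\alpha,z>0$ with no hidden range restrictions (and no issue with $0^{1-\alpha}$ for $\alpha>1$ on these rank-deficient states); (ii) the inequality is proved for every configuration of $S_m,S_n$, rather than inferred from an extremality principle. What the paper's route buys in exchange is an explicit identification of the extremizers (commuting subspaces of maximal or minimal overlap), which your argument recovers implicitly since both of your bounds are attained at exactly those configurations. One bookkeeping caveat: a literal substitution into the definition \eqref{2-1} produces the prefactor $1/(m^{\alpha}n^{1-\alpha})$, not $1/(m^{z}n^{1-\alpha})$; the exponent $m^{z}$ in the statement (and in your reduction) matches the paper's own computation, which tacitly replaces $\rho_{S_m}^{\alpha/2z}$ by $\rho_{S_m}^{1/2}$. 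That inconsistency is inherited from the paper's undefined notation $F^{z}_{\alpha,z}$ and does not affect your spectral argument, which is insensitive to the scalar prefactor.
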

		\begin{proof}
			First, consider the case where \( \rho_{S_m} \) and \( \rho_{S_n} \) are commuting. In this case, for \( \alpha \in (0, \infty) \) and \(z\in (0,\infty)\),
			\begin{equation}\label{5-2}
				\begin{aligned}
					&F^{z}_{\alpha,z}(\rho_{S_m}, \rho_{S_n}) \\&= \text{Tr} \left[ \left( \frac{1}{m} P_{S_m} \right)^{\frac{1}{2}} \left( \frac{1}{n} P_{S_n} \right)^{\frac{1-\alpha}{z}} \left( \frac{1}{m} P_{S_m} \right)^{\frac{1}{2}} \right]^z
					\\	& = \frac{1}{m^z n^{1-\alpha}} \text{Tr} \left[ \left( P_{S_m} \right)^{\frac{1}{2}} \left( P_{S_n} \right)^{\frac{1-\alpha}{z}} \left( P_{S_m} \right)^{\frac{1}{2}} \right]^z \\
					& =  \frac{1}{m^z n^{1-\alpha}}  \text{Tr} \left[ P_{S_m} P_{S_n} P_{S_m} \right]^z
					\\	& =  \frac{1}{m^z n^{1-\alpha}}  \text{Tr} \left[ P_{S_m \cap S_n} \right]^z
					\\	& = \frac{1}{m^z n^{1-\alpha}}  \dim(S_m \cap S_n).
				\end{aligned}
			\end{equation}
			Using this result, we can determine the extremas of \( F_{\alpha,z}(\rho_{S_m}, \rho_{S_n}) \) by varying \( S_m \) and \( S_n \) over all \( m \)-dimensional and \( n \)-dimensional subspaces of the \( d \)-dimensional Hilbert space \( \mathcal{H}_d \). According to Theorem \ref{th3.1}, the extremas of \( F_{\alpha,z}(\rho_{S_m}, \rho_{S_n}) \) are obtained when \( S_m \) and \( S_n \) are chosen such that \( \rho_{S_m} \) and \( \rho_{S_n} \) commute.

			Given that,
			\[
			\max(m + n - d, 0) \leq \dim(S_m \cap S_n) \leq \min(m, n)
			\]
			Therefore,
			\begin{equation}
				\begin{aligned}
					\frac{\max \left\{m+n-d,0\right\}}{m^z n^{1-\alpha}}
					& \leq F^{z}_{\alpha,z}(\rho_{S_m}, \rho_{S_n}) 
					\\ &\leq \min \left\{  \frac{m^{1-z}}{n^{1-\alpha}},  \frac{n^\alpha}{m^z}  \right\}
				\end{aligned}
			\end{equation}
		\end{proof}
		\begin{remark}
			In Theorem \ref{th4.6}, the quantum $\alpha$-$z$-fidelity ($F_{\alpha,z}(\rho_{S_m}, \rho_{S_n})$) measures the overlap between two subspaces. When $S_m$ and $S_n$ are disjoint, $F_{\alpha,z}(\rho_{S_m}, \rho_{S_n}) = 0$. When $S_m = S_n$, $F_{\alpha,z}(\rho_{S_m}, \rho_{S_n}) = d^{\alpha-z}$.
			
		\end{remark}
		
		\begin{proposition} \label{pp4.7}
			Suppose $S_n$ is an $n$-dimensional subspace of the Hilbert space $\mathcal{H}_d$. For a quantum state $\rho$, let its eigenvalues be arranged in descending order $($i.e., $\lambda_1(\rho) \geq \lambda_2(\rho) \geq \cdots \geq \lambda_{d - 1}(\rho) \geq \lambda_d(\rho))$. Then,
			\begin{equation}\label{5-3}
				\begin{aligned}
					\sum_{i = d - n + 1}^{d} n^{\alpha - 1} \lambda_i^z(\rho) \leq F^{z}_{\alpha,z}(\rho, \rho_{S_n}) \leq \sum_{i = 1}^{n} n^{\alpha-1} \lambda_i^z(\rho)
				\end{aligned}
			\end{equation}
			
			For any fixed $\rho$, there exist subspaces $S_n$ that achieve the upper and lower bounds of $F^{z}_{\alpha,z}(\rho, \rho_{S_n})$.
		\end{proposition}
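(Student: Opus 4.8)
The plan is to reduce $F^{z}_{\alpha,z}(\rho,\rho_{S_n})$ to a trace involving the compression of $\rho$ to the subspace $S_n$, and then to control the eigenvalues of that compression by Cauchy interlacing. First I would substitute $\rho_{S_n}=\frac{1}{n}P_{S_n}$ and use the fact that every positive power of an orthogonal projection is the projection itself, so $P_{S_n}^{(1-\alpha)/z}=P_{S_n}$; following the same convention used in the proof of Theorem~\ref{th4.6} this gives
\begin{equation*}
	F^{z}_{\alpha,z}(\rho,\rho_{S_n}) = \frac{1}{n^{1-\alpha}}\operatorname{Tr}\left[\left(\rho^{1/2}P_{S_n}\rho^{1/2}\right)^z\right].
\end{equation*}
Setting $A=\rho^{1/2}P_{S_n}$ and using that $AA^{*}$ and $A^{*}A$ share the same nonzero spectrum, the inner trace equals $\operatorname{Tr}[(P_{S_n}\rho P_{S_n})^z]$, which is $\sum_{i=1}^n \mu_i^z$, where $\mu_1\geq\cdots\geq\mu_n$ are the eigenvalues of the compression $P_{S_n}\rho P_{S_n}$ acting on $S_n$; the remaining $d-n$ eigenvalues of the $d\times d$ operator are zero and contribute nothing since $z>0$.

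Next I would invoke the Cauchy interlacing theorem for the compression of a Hermitian operator to an $n$-dimensional subspace: with eigenvalues ordered decreasingly, $\lambda_{i+d-n}(\rho)\leq \mu_i \leq \lambda_i(\rho)$ for $i=1,\dots,n$. Because $t\mapsto t^{z}$ is increasing on $[0,\infty)$ for $z>0$, these inequalities are preserved under the power, and summing over $i$ yields
\begin{equation*}
	\sum_{i=d-n+1}^{d} \lambda_i^z(\rho) \leq \sum_{i=1}^n \mu_i^z \leq \sum_{i=1}^n \lambda_i^z(\rho),
\end{equation*}
after reindexing $\sum_{i=1}^n \lambda_{i+d-n}^z = \sum_{i=d-n+1}^{d}\lambda_i^z$. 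Multiplying through by $n^{\alpha-1}=\frac{1}{n^{1-\alpha}}$ reproduces exactly the two bounds in the statement.

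For the attainment claim I would exhibit the extremal subspaces explicitly. Taking $S_n$ to be the span of the eigenvectors of $\rho$ associated with its $n$ largest eigenvalues makes $P_{S_n}\rho P_{S_n}|_{S_n}=\operatorname{diag}(\lambda_1,\dots,\lambda_n)$, so $\mu_i=\lambda_i$ and the upper bound is saturated; choosing the span of the $n$ smallest eigenvectors gives $\mu_i=\lambda_{i+d-n}$ and saturates the lower bound. In both cases $\rho$ and $\rho_{S_n}$ commute, consistent with the commuting reduction used in Theorem~\ref{th4.6}.

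The main obstacle, and essentially the only step requiring care, is the interlacing estimate together with the passage from the $d\times d$ operator $\rho^{1/2}P_{S_n}\rho^{1/2}$ to the $n\times n$ compression $P_{S_n}\rho P_{S_n}$: one must verify that the nonzero spectra genuinely coincide (via the $AA^{*}/A^{*}A$ identity) and that the zero eigenvalues do not corrupt the trace, which holds precisely because $z>0$. Everything else is bookkeeping with the scalar factor $n^{\alpha-1}$.
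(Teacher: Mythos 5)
Your proposal is correct and follows essentially the same route as the paper: reduce $F^{z}_{\alpha,z}(\rho,\rho_{S_n})$ to $n^{\alpha-1}\operatorname{Tr}\bigl[(P_{S_n}\rho P_{S_n})^{z}\bigr]$ using the projection property of $\rho_{S_n}$, then apply the Cauchy interlacing theorem to the compression and the monotonicity of $t\mapsto t^{z}$. Your write-up is in fact slightly more complete than the paper's, since you justify the passage between $\rho^{1/2}P_{S_n}\rho^{1/2}$ and $P_{S_n}\rho P_{S_n}$ via the $AA^{*}/A^{*}A$ identity and explicitly exhibit the eigenvector subspaces that attain both bounds, a step the paper leaves implicit.
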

		
		\begin{proof}
			First, consider the operator $P_{S_n} \rho P_{S_n}$ on the subspace $S_n$, with its eigenvalues denoted as $\tilde{\lambda}_1(\rho), \tilde{\lambda}_2(\rho), \ldots, \tilde{\lambda}_{n - 1}(\rho), \tilde{\lambda}_n(\rho)$ satisfying,
			\[
			\tilde{\lambda}_1(\rho) \geq \tilde{\lambda}_2(\rho) \geq \cdots \geq \tilde{\lambda}_{n - 1}(\rho) \geq \tilde{\lambda}_n(\rho)
			\]
			By the Cauchy interlacing Theorem(\cite{GMST}), we have:
			\[
			\lambda_{i + d - n} \leq \tilde{\lambda}_i \leq \lambda_i
			\]
			From the definition of $\alpha$-z-fidelity, we have:
			\begin{equation}\label{5-4}
				\begin{aligned}
					F^{z}_{\alpha,z}(\rho, \rho_{S_n}) & = \text{Tr} \left[ \left(\frac{1}{n} P_{S_n}\right)^{\frac{1-\alpha}{2z}} \rho \left(\frac{1}{n} P_{S_n}\right)^{\frac{1-\alpha}{2z}} \right]^z 
					\\  & = n^{\alpha - 1} \text{Tr} \left( P_{S_n} \rho P_{S_n} \right)^z.
				\end{aligned}
				%	\tag{5-4}
			\end{equation}
			Therefore Equation \eqref{5-3} is proven.
		\end{proof}
		
		\begin{remark}
			Corollary \ref{cl4.2} is a corollary of Proposition \ref{pp4.7}. In fact, all pure states are rank-1 projections, so the dimension of the set of all pure states is 1. By concavity, the extremas of $\alpha$-$z$-fidelity will be achieved at the boundary of the set $D(\mathcal{H}_d)$, which is at pure states. Hence, we obtain the lower bound $\lambda_d(\rho)$ and upper bound $\lambda_1(\rho)$ in Proposition \ref{pp4.7}, which is exactly the result of Corollary \ref{cl4.2}. 
			
		\end{remark}

		\section{Conclusion}
		In this paper, we define a quantum \(\alpha\)-$z$-fidelity, which generalizes the concept of fidelity. Specifically, we introduce this new concept to study the similarity and distance between quantum states in a broader context. Furthermore, we investiage some of its key properties, including the data processing inequality, which is a crucial tool in quantum information theory.
		\par
		We further study  the \(\alpha\)-\(z\)-fidelity between unitary orbits of any quantum states using mathematical tools such as the data processing inequality, Golden-Thompson inequality, and Araki-Lieb-Thirring inequality. Furthermore, we obtain the explicit expressions for the maximum and minimum of $\alpha$-$z$-fidelity for several quantum channels, including all quantum channels, unitary quantum channels, and mixed unitary channels. The analysis uses density operators and the closed convex set of quantum channels to provide a geometric interpretation of the \(\alpha\)-\(z\)-fidelity. A key component of this study is the Ostrovsky theorem, which helps determine the extremes of the \(\alpha\)-\(z\)-fidelity, particularly for unitary quantum channels. 
		%The unitary orbits, essential in quantum information processing, represent all possible quantum states under the unitary group action. This theorem reveals their transmission characteristics and behavior under quantum evolution.
		\par
		Overall, this paper not only provides a systematic exposition of the quantum \(\alpha\)-$z$-fidelity, but also studies its role and properties in practical applications using various mathematical tools and inequalities. This offers new perspectives and theoretical support for the comparison of quantum states and the advancement of quantum information theory.
		\\
		
		\textit{Acknowledgments}: We thank Professor Zhongjin Ruan, Professor Marius Junge, Zhi Yin for very helpful discussions and comments. This work was supported by the Science Foundation of Nanjing University of Posts and Telecommunications (Grant No. NY222025), and the High-level Innovation and Entrepreneurship Talents Program of Jiangsu Province (Grant No. JSS-CBS20220652).
		
		\bibliographystyle{plain}

		\onecolumn\newpage
		\appendix

	\end{document}